\newtheorem{theorem}{Theorem}
\newtheorem{lemma}{Lemma}[theorem]
\pgfplotsset{compat=1.9}
\definecolor{plotcolor0}{RGB}{255,0,0}
\definecolor{plotcolor1}{RGB}{0,0,255}
\definecolor{plotcolor2}{RGB}{64,170,0}
\definecolor{plotcolor3}{RGB}{170,0,255}
\definecolor{plotcolor4}{RGB}{0,170,255}
\definecolor{plotcolor5}{RGB}{164,164,0}
\definecolor{plotcolor6}{RGB}{128,128,128}
\definecolor{plotcolor7}{RGB}{0,0,0}
\definecolor{plotcolor8}{RGB}{0,110,165}
\definecolor{plotcolor9}{RGB}{128,64,0}
\pgfplotsset{
  myPlot/.style={
    grid,
    width=59mm,height=48mm,
    major grid style={thin,dotted,color=black!50},
    minor grid style={thin,dotted,color=black!50},
    every axis/.append style={
      thin,
      tick style={
        line cap=round,
        thin,
      },
    },
    cycle list name={my_color},
    major tick length=3pt,
    minor tick length=1.5pt,
    legend cell align=left,
    xlabel near ticks,
    ylabel near ticks,
    title style={yshift=-0.6ex,font=\small},
  },
}
\pgfplotsset{
  log x ticks with fixed point/.style={
    xticklabel={
      \pgfkeys{/pgf/fpu=true}
      \pgfmathparse{exp(\tick)}%
      \pgfmathprintnumber[fixed relative, precision=3]{\pgfmathresult}
      \pgfkeys{/pgf/fpu=false}
    }
  },
  log y ticks with fixed point/.style={
    yticklabel={
      \pgfkeys{/pgf/fpu=true}
      \pgfmathparse{exp(\tick)}%
      \pgfmathprintnumber[fixed relative, precision=3]{\pgfmathresult}
      \pgfkeys{/pgf/fpu=false}
    }
  }
}
\newcommand{\coloneqq}{:=}
\definecolor {infocolor} {rgb} {0.6,0.6,0.6}
\newcommand{\ceil}[1]{\left\lceil #1\right\rceil}
\newcommand{\floor}[1]{\left\lfloor #1\right\rfloor}
\newcommand{\abs}[1]{\left| #1\right|}
\newcommand{\realrange}[2]{\left[#1, #2\right]}
\newcommand{\unitrange}[2]{\realrange{0}{1}}
\newcommand{\Oh}[1]{\mathcal{O}({#1})}
\newcommand{\Th}[1]{\Theta({#1})}
\newcommand{\Om}[1]{\Omega({#1})}
\newcommand{\llabel}[1]{\label{\labelprefix:#1}}
\newcommand{\labelprefix}{} 
\newcommand{\discussionsize}{\small}
\newenvironment{code}{\noindent
\begin{tabbing}%
\hspace{2em}\=\hspace{2em}\=\hspace{2em}\=\hspace{2em}\=\hspace{2em}\=%
\hspace{2em}\=\hspace{2em}\=\hspace{2em}\=\hspace{2em}\=\hspace{2em}\=%
\kill}{\end{tabbing}}
\newcommand{\labelcommand}{}
\newcommand{\captiontext}{}
\newsavebox{\codeparam}
\newcounter{lineNumber}
\newenvironment{disscodepos}[3]{%
\renewcommand{\labelcommand}{#2}%
\renewcommand{\captiontext}{#3}%
\sbox{\codeparam}{\parbox{\textwidth}{#3}}%
\begin{figure}[#1]\begin{center}\begin{code}\setcounter{lineNumber}{1}}{%
\end{code}\end{center}\caption{\llabel{\labelcommand}\captiontext}\end{figure}}
\newdimen\endofsize\endofsize=0.5em
\def\endofbeweis{~\quad\hglue\hsize minus\hsize
                 \hbox{\vrule height \endofsize width
\endofsize}\par}
\newcommand{\StringSet}[1]{\mathcal{#1}}
\newcommand{\dpre}[1]{\textsc{dist}{(#1)}}
\newcommand{\lcpMathh}[2]{\textsc{lcp(}#1, #2\textsc{)}}
\newcommand{\CommonCrawl}{\textsc{CommonCrawl}\xspace}
\newcommand{\DnaReads}{\textsc{DnaReads}\xspace}
\newcommand{\dton}{$D/N$\xspace}
\newcommand{\MSLCPS}{\texttt{MS}\xspace}
\newcommand{\MSSimpleS}{\texttt{MS-simple}\xspace}
\newcommand{\PDGolombS}{\texttt{PDMS-Golomb}\xspace}
\newcommand{\PDNoGolombS}{\texttt{PDMS}\xspace}
\newcommand{\kurpicz}{\texttt{FKmerge}\xspace}
\newcommand{\hQuick}{\texttt{hQuick}\xspace}
\newcommand{\dMSS}{MS\xspace}
\newcommand{\dPDSS}{PDMS\xspace}
\newcommand{\lmax}{\hat{\ell}}
\newcommand{\Dmax}{\hat{D}}
\newcommand{\dmax}{\hat{d}}
\newcommand{\nmax}{\hat{n}}
\newcommand{\Nmax}{\hat{N}}
\begin{document}
\pagestyle{plain}
%
\title{Communication-Efficient String Sorting}


\author{%
  \IEEEauthorblockN{Timo Bingmann, Peter Sanders, Matthias Schimek}
  \IEEEauthorblockA{\textit{Karlsruhe Institute of Technology}, \textit{Karlsruhe, Germany} \\
  \small\textit{\{\url{bingmann, sanders}\}\url{@kit.edu}, \url{matthias_schimek@gmx.de}}  \\
}}


%


\maketitle

\begin{abstract}
There has been surprisingly little work on algorithms for sorting strings on distributed-memory parallel machines. We develop efficient algorithms for this problem based on the multi-way merging principle. These algorithms inspect only characters that are needed to determine the sorting order. Moreover, communication volume is reduced by also communicating (roughly) only those characters and by communicating repetitions of the same prefixes only once. Experiments on up to 1280 cores reveal that these algorithm are often more than five times faster than previous algorithms.
\end{abstract}

\begin{IEEEkeywords}
distributed-memory algorithm, string sorting, communication-efficient algorithm
\end{IEEEkeywords}

%
\IEEEpeerreviewmaketitle

\section{Introduction}
Sorting, i.e., establishing the global ordering of $n$ elements $s_0$,
\ldots $s_{n-1}$, is one of the most fundamental and most frequently
used subroutines in computer programs. For example, sorting is used
for building index data structures like B-trees, inverted indices or
suffix arrays, or for bringing data together that needs to be
processed together. Often, the elements have string keys, i.e.,
variable length sequences of characters, or, more generally, multiple
subcomponents that are sorted lexicographically.
For example, this is the case for sorted arrays of strings that facilitate
fast binary search, for prefix B-trees \cite{bayer1977prefix,GraLar01}, or when using
string sorting as a subroutine for suffix sorting (i.e., the problem of sorting all suffixes of \emph{one} string). Using string sorting for suffix sorting can mean to directly sort the suffixes \cite{futamura2001parallel}, or to sort shorter strings as a subroutine. For example, the difference cover algorithm \cite{KarSan05} is theoretically one of the most scalable suffix sorting algorithms. An implementation with large difference cover
could turn out to be the most practical variant but it requires an efficient string sorter for medium length strings.

Sorting strings
using conventional \emph{atomic} sorting algorithms (that treat keys
as indivisible objects) is inefficient since comparing
entire strings can be expensive and has to be done many times in
atomic sorting algorithms. In contrast, efficient string sorting
algorithms inspect most characters of the input only once during the
entire sorting process and they inspect only those characters that are
needed to establish the global ordering. Let $D$ denote the
\emph{distinguishing prefix}, which is the minimal number of
characters that need to be inspected. Efficient sequential string
sorting algorithms come close to the lower bound of $\Om{D}$ for
sorting the input. When characters are black boxes, that can only
be compared but not further inspected, we get a lower bound of
$\Om{D+n\log n}$.  Such comparison-based string sorting algorithms
will be the main focus of our theoretical analysis.  Our
implementations also include some optimizations for integer alphabets.

Surprisingly, there has been little previous work on parallel string
sorting using $p$ processors, a.k.a. processing elements (PEs).
Here one would like to come close to time $\Oh{D/p}$ -- at least for sufficiently large inputs.
Our extensive previous work
\cite{BingSan13,bingmann2018scalable,lcpMergeSort} concentrates on
shared-memory algorithms. However, for large data sets stored on nodes
of large compute clusters, distributed-memory
algorithms are needed.
While in principle the shared-memory algorithms could be adapted, they neglect that \emph{communication volume} is the limiting factor for the scalability of algorithms to large systems \cite{amarasinghe2009exascale,Borkar13,sanders2013communication}.

The present paper largely closes this gap by developing such
communication-efficient string sorting algorithms.  After
discussing preliminaries (Section~\ref{s:prelim}) and further
related work (Section~\ref{s:related}), we begin with a very simple
baseline algorithm based on Quicksort that treats strings as atomic
objects (Section~\ref{ss:hQuick}). We then develop genuine string
sorting algorithms that are based on multi-way mergesort that was
previously used for parallel and external sorting algorithms
\cite{VarEtAl91,RahnSS10,Sundar13doi,axtmann2015practical,fischer2019lightweight}.
The data on each PE is first sorted locally.  It is then partitioned
into $p$ ranges so that one range can be moved to each PE. Finally,
each PE merges the received fragments of data.  The appeal of
multi-way merging for communication-efficient sorting is that the
local sorting exposes common prefixes of the local input strings.  Our
\emph{Distributed String Merge Sort} (\dMSS) described in
Section~\ref{s:DMSS} exploits this by only communicating the length of
the common prefix with the previous string and the remaining
characters.  The LCP values also allow us to use the multiway
LCP-merging technique we developed in \cite{lcpMergeSort} in such a
way that characters are only inspected once.  In addition, we develop
a partitioning scheme that takes the length of the strings into
account in order to achieve better load balancing.

Our second algorithm \emph{Distributed Prefix-Doubling String Merge
  Sort} (\dPDSS) described in Section~\ref{s:DPMSS} further improves
communication efficiency by only communicating characters that may be
needed to establish the global ordering of the data.  The
  algorithm also has optimal local work for a comparison-based string
  sorting algorithm.  See Theorem~\ref{thm:dPDSS} for details.
The key idea is to apply our communication-efficient duplicate
detection algorithm \cite{sanders2013communication} to geometrically
growing prefixes of each string. Once a prefix has no duplicate
anymore, we know that it is sufficient to transmit only this prefix.

In Section~\ref{s:experiments}, we present an extensive experimental
evaluation.
We observe several times better performance compared to previous approaches; in particular for large machines and strings with high potential for saving communication bandwidth.
Section~\ref{s:conclusion} concludes the paper including a
discussion of possible future work.

\section{Preliminaries}\label{s:prelim}

Our input is an
array $\mathcal{S} \coloneqq [s_0, ..., s_{n-1}]$ of $n$ strings with total length $N$.
Sorting $\mathcal{S}$ amounts to permuting it so that a
lexicographical order is established.  A string $s$ of length $\ell = \abs{s}$ is
an array $[\, s[0], \ldots, s[\ell-2], 0 \,]$ where $0$ is a special end-of-string character
outside the alphabet.%
\footnote{Our algorithms can also be adapted to the case without 0-termination where the inputs specify string lengths instead.}
String arrays are usually represented
as arrays of pointers to the beginning of the strings. Thus, entire
strings can be moved or swapped in constant time.
The first $\ell$ characters of a string are its length $\ell$ \emph{prefix}.
Let $\lcpMathh{s_1}{s_2}$ denote the length of the \emph{longest common prefix} (LCP) of $s_1$ and $s_2$.
For a sorted array of strings $\StringSet{S}$, we define the corresponding \emph{LCP array} $[\bot, h_1, h_2, \dots, h_{\abs{\StringSet{S}}-1}]$ with $h_i \coloneqq \lcpMathh{s_{i-1}}{s_i}$. The string sorting algorithms we describe here produce the LCP-array as additional output. This is useful in many applications. For example, it facilitates building
a search tree that allows searching for a string pattern $s$ in time $\Oh{|s|+\log n}$ \cite{bayer1977prefix,GraLar01}.

The distinguishing prefix length $\dpre{s}$ of a string $s$ is
the number of characters that must be inspected to differentiate it
from all other strings in the set $\StringSet{S}$. We have
$\dpre{s} = \max_{t \in \StringSet{S}, t\neq s}\lcpMathh{s}{t} + 1$.  The sum of the distinguishing prefix
lengths $D$ is a lower bound on the number of characters that must be
inspected to sort the input.

Our model of computation is a distributed-memory machine with $p$ PEs.
Sending a message of $m$ bits from one PE to another PE takes time
$\alpha + \beta m$
\cite{fraigniaud1994methods,SMDD19}.\footnote{Usually, the unit is a
  different one, e.g., machine words. Here we use bits in order to be
  able to make more precise statements with respect to the number of
  characters to be communicated.}  Analyzing the communication cost of
our algorithms is mostly based on plugging in the cost of well-known
collective communication operations.  When $h$ is the maximum amount
of data sent or received at any PE, we get
$\Oh{\alpha \log  p+\beta h}$ for broadcast, reduction, and all-to-all broadcast
(a.k.a. gossiping). For personalized all-to-all communication we have a tradeoff between
low communication volume (cost $\Oh{\alpha p+\beta h}$)
and low latency (cost $\Oh{\alpha \log p+\beta h\log p}$); e.g., \cite{SMDD19}.

Table~\ref{tab:notation} summarizes the notation, concentrating on
the symbols that are needed for the result of the algorithm analysis.
\subsection{Sequential String Sorting for the Base Case}\label{s:seqsort}

In \cite{bingmann2018scalable} an extensive evaluation of sequential
string sorting algorithms is given in which a variant of \emph{MSD
  String Radix Sort} has been found to be among the fastest algorithms
on many data sets. We are using this algorithm for our
implementations.  This recursive algorithm considers subproblems where
all strings have a common prefix length $\ell$.  The strings are then
partitioned based on their $(\ell+1)$-st character. The recursion
stops when the subproblem contains less then $\sigma$ strings.  This
takes time $\Oh{D}$ (not counting the base case problems).  These
small subproblems are sorted using Multikey Quicksort
\cite{bentley1997fast}.  This is an adaptation of Quicksort to strings
that needs expected time $\Oh{D+n\log n}$.  Our implementation, in
turn, uses LCP insertion sort \cite{bingmann2018scalable} as a based
case for constant size inputs. This algorithm has complexity
$\Oh{D+n^2}$.  Putting these components together leads to a base case sorter with
cost $\Oh{D+n\log\sigma}$.  We have modified these algorithms so that
they produce an LCP array as part of the output at no additional cost.
The modified implementations have been made available as part of the
tlx library \cite{TLX}.

Our study \cite{bingmann2018scalable} identifies several other
efficient sequential string sorters.  Which ones are best depends on
the characteristics of the input. For example, for large alphabets and
skewed inputs strings, sample sort~\cite{BingSan13} might be better. The resulting
asymptotic complexity for such purely comparison-based algorithms is
$\Oh{D +n\log n}$ which represents a lower bound for string sorting
based on character comparisons.

\begin{table}[t]
  \caption{\label{tab:notation}Summary of Notation for the Algorithm Analysis}
  \begin{tabular}{cl}
    Symbol   & Meaning\\\hline
    $n$      & total number of input strings\\
    $N$      & total number of input characters\\
    $\sigma$ & alphabet size\\
    $D$      & total distinguishing prefix size\\
    $\nmax$  & max. number of strings on any PE\\
    $\Nmax$  & max. number of characters on any PE\\
    $\Dmax$  & max. number of distinguishing prefix characters on any PE\\
    $\lmax$  & length of the longest input string\\
    $\dmax$  & length of the longest distinguishing prefix\\\hline
    $p$      & number of processing elements (PEs)\\
    $\alpha$ & message startup latency\\
    $\beta$  & time per bit of communicated data
  \end{tabular}
\end{table}

\subsection{Multiway LCP-Merging}

We are using our \emph{LCP loser tree} \cite{lcpMergeSort}.  This is a
generalization of the binary merging technique proposed by Ng and
Kakehi \cite{ng2008merging} building on the (atomic) loser tree data
structure \cite{knuth1998sorting}.

A $K$-way (atomic) loser tree (a.k.a. tournament tree) is a binary
tree with $K$ leaves. Each leaf is associated with one current element
of a sorted sequence of objects -- initially the smallest element in that sequence.
This current element is passed up the tree.  Internal
nodes store the larger of the elements passed up to them (the loser)
and pass up the smaller element (the winner) to the next level. The
element passed up by the root is the globally smallest element.
This element is output in each step.
The sequence corresponding to the winner's leaf is advanced to the
next element. The data structure invariant of the loser tree can be
reestablished in logarithmic time by repairing it step by step while
going upwards from the winner's leaf to the root.
This also determines the next element to be output.

Loser trees are adapted to strings by associating each sorted
sequence with its LCP array.  Moreover, internal nodes store the intermediate LCP
length of the compared strings. The output is the sorted sequence
representing all input sequences plus the corresponding LCP array. The
number of character comparisons for multiway LCP-merging of $m$ strings
is bounded by $m\log K+\Delta L$ where $\Delta L$ is the total
increment of the LCP-array entries of the input strings. Embedded into
a string sorting algorithm this leads to total complexity $\Oh{D+n\log n}$
for sorting $n$ strings.

\subsection{Distributed Multiway Mergesort}\label{ss:dmm}
A starting point for our algorithms is the distributed-memory
mergesort algorithm by Fischer and Kurpicz
\cite{fischer2019lightweight} as a subroutine for suffix array
construction.  The data is first sorted locally using a sequential string sorting algorithm.
It is then partitioned globally by $p-1$ splitter strings $f_1$,\ldots, $f_{p-1}$
such that PE $i$ gets all the strings $s$ with $f_{i} < s \leq
f_{i+1}$ (with $f_0$ denoting an ``infinitely'' small string and $f_p$
an ``infinitely'' large one).
Fischer and Kurpicz choose these splitters based on a deterministic sampling technique where each PE chooses
$p-1$ samples equidistantly from its sorted local input.
After gathering the samples on PE 0,
the splitters are chosen equidistantly from the globally sorted sample.
They use an ordinary (not LCP-aware) loser tree for merging strings.

\section{More Related Work}\label{s:related}

This paper is based on the master's thesis of Matthias Schimek
\cite{Schimek19}.  There has been intensive work on sequential string
sorting. Refer to
\cite{karkkainen2008engineering,sinha2010engineering,bingmann2018scalable}
for an overview of results and comparative studies.  There are very
fast PRAM algorithms for sorting strings of total length $N$ with work
$\Oh{N\log N}$ ($\Oh{N\log\log N}$ for integer characters), e.g.,
\cite{Hagerup94,jaja1994efficient}. Note that our results need
only \emph{linear} work in the (possibly much smaller)
distinguishing prefix length $D$ rather than in the total input size
$N$. The previous algorithms use a doubling technique similar to the
one used by Manber and Myers \cite{manber1993suffix} for suffix
sorting: Use integer sorting to build lexicographic names of
substrings with a length that doubles in every iteration. The
doubling technique in our \dPDSS-algorithm is much simpler -- it
only requires hashing of prefixes of the strings. Also, doubling is
not inherent in this technique but only one special case. To achieve
better approximation of distinguishing prefix lengths one can also
uses smaller multipliers.
Neelima et al. \cite{NeelimaNP14} study string sorting on GPUs.

\section{Parallel String Sorting Based on\\ Atomic Parallel Quicksort}\label{ss:hQuick}

This section serves two purposes. We describe a simple parallel string
sorting algorithm whose analysis can serve as a basis for comparing it
with the more sophisticated algorithms below. We also use this
algorithm as a subroutine in the others.

This algorithm -- \hQuick -- is a rather straightforward adaptation
of an atomic sorting algorithm based on a Quicksort variant introduced
in \cite{axtmann2017robust}. We therefore only outline it, focusing on
the changes needed for string sorting. Let $d=\floor{\log p}$.  The
algorithm employs only $2^d\geq p/2$ PEs which it logically
arranges as a $d$-dimensional hypercube. The algorithm starts by
moving each input string to a random hypercube node.
\hQuick proceeds in $d$ iterations.  In iteration $i=d,\ldots,1$,
the remaining task is to sort the data within $i$-dimensional
subcubes of this hypercube.  To establish the loop invariant for the
next iteration, a pivot string $s$ is determined as a good
approximation of the median of the strings within each subcube. This
is done using a special kind of tree reduction. One subcube will then
work on the strings $\leq s$ and one works on the strings $>s$. A
tie breaking scheme enforces that the pivot is unique.  When the loop has terminated,
the remaining problem is to locally sort the data on one PE.

\begin{theorem}\label{thm:hQuick}\sloppypar
  With the notation from Table~\ref{tab:notation},
  Algorithm~\hQuick needs
  local work $\Oh{n\lmax/p\log n}$,
  latency $\Oh{\alpha\log^2 p}$,
  and bottleneck communication volume
  $\Oh{(\Nmax+n\lmax/p\log p + \lmax\log^2p)\log\sigma}$ bits in expectation.%
  \footnote{This conservative bound ensues if $n/p$
    strings of length $\lmax$ are concentrated on a single
    PE. Randomization makes this unlikely. However, the ordering of
    the strings might enforce such a distribution at the end. Hence,
    it may be possible to improve the work and communication bounds by
    a factor $\log p$.}
\end{theorem}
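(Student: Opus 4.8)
The plan is to observe that \hQuick\ is the atomic hypercube Quicksort of Axtmann et al.~\cite{axtmann2017robust} run with strings as the element type, so all three bounds follow by taking their analysis and charging $\Oh{\lmax}$ time per string comparison or move and $\Oh{\lmax\log\sigma}$ bits per transmitted string.

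First I would recall what the atomic analysis provides: with pivots chosen as sufficiently good approximate medians, the recursion has depth $d=\floor{\log p}$, and --- combining a Chernoff bound over the random initial placement with the balance guarantee of the pivot-selection step --- the number of strings residing on any PE stays $\Oh{n/p}$ throughout the $d$ iterations with high probability, while each string is routed $\Oh{d}=\Oh{\log p}$ times. Each iteration consists of one pivot-selection tree reduction inside the current subcube and one partition-and-route step, each of latency $\Oh{\log p}$, which already yields total latency $\Oh{\alpha\log^2 p}$ (the initial random redistribution and the concluding local sort contribute nothing of higher order). For local work, the concluding atomic string sort on the $\Oh{n/p}$ local strings costs $\Oh{(n/p)\log(n/p)}$ comparisons, each inspecting $\Oh{\lmax}$ characters, i.e.\ $\Oh{n\lmax/p\log n}$; the $\Oh{\log p}$ partitioning rounds compare $\Oh{n/p}$ strings against the pivot each, adding only $\Oh{n\lmax/p\log p}$, which is subsumed.

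For the bottleneck communication volume I would split the cost into three parts. In the initial randomization a PE ships out all of its own input --- at most $\Nmax$ characters, hence $\Oh{\Nmax\log\sigma}$ bits --- and receives $\Oh{n/p}$ strings in expectation, which falls under the next part. In each of the $\Oh{\log p}$ routing rounds a PE sends and receives $\Oh{n/p}$ strings; a string has at most $\lmax$ characters, so this is $\Oh{(n\lmax/p)\log\sigma}$ bits per round and $\Oh{(n\lmax/p)\log p\,\log\sigma}$ bits in total --- this is the conservative estimate alluded to in the footnote, tight only when $n/p$ maximum-length strings happen to pile up on one PE. Finally, in each of the $d$ iterations the chosen pivot (a string of $\Oh{\lmax}$ characters plus $\Oh{\log p}$ tie-breaking bits, which are negligible) is gathered and broadcast along paths of depth $\Oh{\log p}$, contributing $\Oh{\lmax\log^2 p\,\log\sigma}$ bits. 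Adding the three parts gives the claimed bound.

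The hard part will be the probabilistic bookkeeping: only the number of strings per PE is concentrated, whereas the length-weighted quantities that actually govern communication are not, since the string lengths can be adversarial; one therefore has to argue that the claimed bound holds \emph{in expectation} over the random placement and pivot choices, uniformly replacing a per-PE character count by (its string count)$\,\times\lmax$ wherever concentration is unavailable, and to verify that the tie-breaking gadget and the use of only $2^d\ge p/2$ active PEs affect everything by at most constant factors.
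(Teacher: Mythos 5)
Your proposal is correct and follows essentially the same route as the paper's (much terser) proof sketch: the $\Nmax\log\sigma$ term from the initial random placement, the transfer of the $\Oh{n/p}$-strings-per-PE guarantee from the atomic analysis of Axtmann et al., the conservative charge of $\lmax$ work and communication per string per iteration, the $\Oh{n\lmax/p\log n}$ local sort, and the $\lmax\log^2 p\log\sigma$ term from transmitting pivot candidates along a logarithmic-depth reduction tree in each of the $\Oh{\log p}$ iterations. Your closing remark about why only the string counts (not the length-weighted volumes) concentrate is exactly the point the paper relegates to its footnote.
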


Before turning to the analysis, we interpret what this theorem says.
Algorithm~\hQuick is not very efficient because all the data is moved
a logarithmic number of times and because using an approximation of
the median as a pivot only balances the \emph{number} of strings but
not their total length. Also, string comparisons do not exploit
information on common prefixes that may be implicitly available. On
the other hand, the algorithm has only polylogarithmic latency which makes
it a good candidate for sorting small inputs.

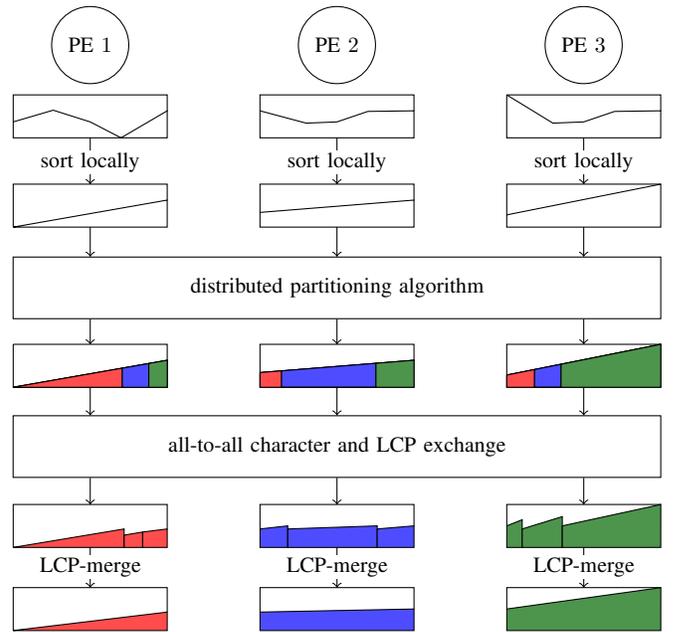
\begin{figure}
  \centering
  \tikzstyle{normalNode}=[inner sep=0.5mm, minimum size=1.25cm, circle, draw = black]
  \begin{tikzpicture}[scale=0.82,transform shape]
    \def\yup{0.65}
    \def\ydown{-0.65}
    \def\yone{0}
    \def\dist{-1}
    \def\xStart{0}
    \def\xEnd{15}

    \def\xdist{2cm}
    \def\below{0.5cm}
    \def\belowtwo{1.0cm}
    \def\belowSorted{4.25cm}
    \def\belowSorter{3.0cm}
    \def\belowCommunicator{6.0cm}
    \def\belowFinal{7.0cm}
    \def\belowAllgather{8.5cm}
    \def\belowResult{9.5cm}
    \def\belowBuckts{11.0cm}
    \def\beloww{-6.5cm}
    \def\posSorter{-4cm}
    \def\regularSize{0.45cm}
    \def\arrayHeight{0.3cm}

    \def\belowInputF{4mm}
    \def\belowInputA{12mm}
    \def\belowInput{14mm}
    \def\belowBucket{28mm}
    \def\belowLocallySorted{23.5mm}
    \def\belowLocallySortedSmaller{23.5mm}
    \def\belowStringExchange{53.75mm}
    \def\belowInputE{11mm}
    \def\inputYBottom{-2cm}
    \def\inputYTop{-2cm}
    \def\inputXLeft{0}

    \def\x{1.0}
    \def\y{0.1}

    \def\darkgreen{black!60!green!70}
    \def\red{red!70}
    \def\blue{blue!70}

    \tikzstyle{rectangleSmall}=[inner sep=0.5mm, minimum width=1cm, minimum height = 0.75cm, rectangle, draw = black]
    \tikzstyle{rectangleNode}=[inner sep=0.5mm, minimum width=2.5cm, minimum height = 0.75cm, rectangle, draw = black]
    \tikzstyle{bigRectangle}=[inner sep=0.5mm, minimum width=105mm, minimum height = 1.0cm, rectangle, draw = black]

    \draw node (PE1) at (0, 0) [normalNode] {PE $1$};
    \draw node (PE2) at (2 * \xdist, 0) [normalNode] {PE $2$};
    \draw node (PE3) at (4 * \xdist, 0) [normalNode] {PE $3$};

    \draw node [below = \belowInputF of PE1] (PE1Middle) {};
    \draw node [below = \belowInputF of PE2] (PE2Middle) {};
    \draw node [below = \belowInputF of PE3] (PE3Middle) {};

    \draw node (PE11) [above left  =\y and \x of PE1Middle] {};
    \draw node (PE12) [below left  =\y and \x of PE1Middle] {};
    \draw node (PE13) [below right =\y and \x of PE1Middle] {};
    \draw node (PE14) [above right =\y and \x of PE1Middle] {};

    \draw node (PE21) [above left  =\y and \x of PE2Middle] {};
    \draw node (PE22) [below left  =\y and \x of PE2Middle] {};
    \draw node (PE23) [below right =\y and \x of PE2Middle] {};
    \draw node (PE24) [above right =\y and \x of PE2Middle] {};

    \draw node (PE31) [above left  =\y and \x of PE3Middle] {};
    \draw node (PE32) [below left  =\y and \x of PE3Middle] {};
    \draw node (PE33) [below right =\y and \x of PE3Middle] {};
    \draw node (PE34) [above right =\y and \x of PE3Middle] {};

    \draw node (0PE1) [above =0.1*\y                  of PE12] {};
    \draw node (1PE1) [above right =\y*2 and 0.4*\x   of PE12] {};
    \draw node (2PE1) [above right =\y*0.1 and \x     of PE12] {};
    \draw node (3PE1) [right =1.5*\x                  of PE12] {};
    \draw node (4PE1) [above =1.9*\y                  of PE13] {};

    \draw (0PE1.center) -- (PE11.center) -- (PE14.center) -- (4PE1.center);
    \draw (0PE1.center) -- (1PE1.center) -- (2PE1.center) -- (3PE1.center) -- (4PE1.center) -- (PE13.center) -- (PE12.center) -- (0PE1.center);

    \draw node (0PE2) [above =1.9*\y                   of PE22] {};
    \draw node (1PE2) [above right =-\y*0.1 and 0.5*\x of PE22] {};
    \draw node (2PE2) [above right =\y*0.1 and \x      of PE22] {};
    \draw node (3PE2) [above right =1.8*\y and 1.5*\x  of PE22] {};
    \draw node (4PE2) [above =1.9*\y                   of PE23] {};

    \draw (0PE2.center) -- (PE21.center) -- (PE24.center) -- (4PE2.center);
    \draw (0PE2.center) -- (1PE2.center) -- (2PE2.center) -- (3PE2.center) -- (4PE2.center) -- (PE23.center) -- (PE22.center) -- (0PE2.center);

    \draw node (0PE3) at (PE31) {};
    \draw node (1PE3) [above right =-\y*0.1 and 0.5*\x of PE32] {};
    \draw node (2PE3) [above right =\y*0.1 and \x      of PE32] {};
    \draw node (3PE3) [above right =1.8*\y and 1.5*\x  of PE32] {};
    \draw node (4PE3) [above =1.9*\y                   of PE33] {};

    \draw (0PE3.center) -- (PE31.center) -- (PE34.center) -- (4PE3.center);
    \draw (0PE3.center) -- (1PE3.center) -- (2PE3.center) -- (3PE3.center) -- (4PE3.center) -- (PE33.center) -- (PE32.center) -- (0PE3.center);

    %
    %

    \draw node [below = \belowInputA of PE1Middle](PE1MiddleLS) {};
    \draw node [below = \belowInputA of PE2Middle](PE2MiddleLS) {};
    \draw node [below = \belowInputA of PE3Middle](PE3MiddleLS) {};

    \draw node (PE11LS) [above left  =\y and \x of PE1MiddleLS] {};
    \draw node (PE12LS) [below left  =\y and \x of PE1MiddleLS] {};
    \draw node (PE13LS) [below right =\y and \x of PE1MiddleLS] {};
    \draw node (PE14LS) [above right =\y and \x of PE1MiddleLS] {};

    \draw node (PE21LS) [above left  =\y and \x of PE2MiddleLS] {};
    \draw node (PE22LS) [below left  =\y and \x of PE2MiddleLS] {};
    \draw node (PE23LS) [below right =\y and \x of PE2MiddleLS] {};
    \draw node (PE24LS) [above right =\y and \x of PE2MiddleLS] {};

    \draw node (PE31LS) [above left  =\y and \x of PE3MiddleLS] {};
    \draw node (PE32LS) [below left  =\y and \x of PE3MiddleLS] {};
    \draw node (PE33LS) [below right =\y and \x of PE3MiddleLS] {};
    \draw node (PE34LS) [above right =\y and \x of PE3MiddleLS] {};

    \draw node (0PE1LS) at(PE12LS) {};
    \draw node (1PE1LS) [above =1.9*\y                  of PE13LS] {};

    \draw (0PE1LS.center) -- (PE11LS.center) -- (PE14LS.center) -- (1PE1LS.center);
    \draw (0PE1LS.center) -- (1PE1LS.center) -- (PE13LS.center) -- (PE12LS.center) -- (0PE1LS.center);

    \draw node (0PE2LS) [above =-0.1*\y                   of PE22LS] {};
    \draw node (1PE2LS) [above =1.9*\y                   of PE23LS] {};

    \draw (0PE2LS.center) -- (PE21LS.center) -- (PE24LS.center) -- (1PE2LS.center);
    \draw (0PE2LS.center) -- (1PE2LS.center) -- (PE23LS.center) -- (PE22LS.center) -- (0PE2LS.center);

    \draw node (0PE3LS) [above =-0.5*\y                  of PE32LS] {};
    \draw node (1PE3LS) at (PE34LS) {};

    \draw (0PE3LS.center) -- (PE31LS.center) -- (PE34LS.center) -- (1PE3LS.center);
    \draw (0PE3LS.center) -- (1PE3LS.center) -- (PE33LS.center) -- (PE32LS.center) -- (0PE3LS.center);

    \draw node (partitioner) [bigRectangle, below = \belowBucket of PE2]  {distributed partitioning algorithm};

    %
    %

    \draw node [below = \belowLocallySorted of PE1MiddleLS](PE1MiddlePart) {};
    \draw node [below = \belowLocallySorted of PE2MiddleLS](PE2MiddlePart) {};
    \draw node [below = \belowLocallySorted of PE3MiddleLS](PE3MiddlePart) {};

    \draw node (PE11Part) [above left  =\y and \x of PE1MiddlePart] {};
    \draw node (PE12Part) [below left  =\y and \x of PE1MiddlePart] {};
    \draw node (PE13Part) [below right =\y and \x of PE1MiddlePart] {};
    \draw node (PE14Part) [above right =\y and \x of PE1MiddlePart] {};

    \draw node (PE21Part) [above left  =\y and \x of PE2MiddlePart] {};
    \draw node (PE22Part) [below left  =\y and \x of PE2MiddlePart] {};
    \draw node (PE23Part) [below right =\y and \x of PE2MiddlePart] {};
    \draw node (PE24Part) [above right =\y and \x of PE2MiddlePart] {};

    \draw node (PE31Part) [above left  =\y and \x of PE3MiddlePart] {};
    \draw node (PE32Part) [below left  =\y and \x of PE3MiddlePart] {};
    \draw node (PE33Part) [below right =\y and \x of PE3MiddlePart] {};
    \draw node (PE34Part) [above right =\y and \x of PE3MiddlePart] {};

    \draw node (0PE1Part) at(PE12Part) {};
    \draw node (1PE1Part) [above =1.9*\y                  of PE13Part] {};
    \draw let \p1=(PE11Part) in node (0PE1PartUP) at(0.95cm, \y1) {};
    \draw let \p1=(PE12Part) in node (0PE1PartDOWN) at(0.95cm, \y1) {};
    \draw let \p1=(PE11Part) in node (1PE1PartUP) at(0.52cm, \y1) {};
    \draw let \p1=(PE12Part) in node (1PE1PartDOWN) at(0.52cm, \y1) {};

    \draw (PE11Part.center) -- (PE12Part.center) -- (PE13Part.center) -- (PE14Part.center) -- (PE11Part.center);
    \draw (0PE1Part.center) -- (1PE1Part.center) -- (PE13Part.center) -- (PE12Part.center) -- (0PE1Part.center);
    \path [name path=path1](0PE1PartUP.center) -- (0PE1PartDOWN.center);
    \path [name path=path2](1PE1PartUP.center) -- (1PE1PartDOWN.center);
    \draw [name path=diag](0PE1Part.center) -- (1PE1Part.center);
    \path [name intersections={of=path2 and diag,by=E}];
    \path [name intersections={of=path1 and diag,by=F}];

    \draw [fill=\red](0PE1Part.center) -- (1PE1PartDOWN.center) -- (E.center) -- (0PE1Part.center);
    \draw [fill=\blue](1PE1PartDOWN.center) -- (E.center) -- (F.center) -- (0PE1PartDOWN.center) -- (1PE1PartDOWN.center);
    \draw [fill=\darkgreen](0PE1PartDOWN.center) -- (F.center) -- (1PE1Part.center) -- (PE13Part.center) -- (0PE1PartDOWN.center);

    \draw node (0PE2Part) [above =-0.1*\y                   of PE22Part] {};
    \draw node (1PE2Part) [above =1.9*\y                    of PE23Part] {};
    \draw let \p1=(PE21Part) in node (0PE2PartUP)   at(\x1 + 0.35cm, \y1) {};
    \draw let \p1=(PE22Part) in node (0PE2PartDOWN) at(\x1 + 0.35cm, \y1) {};
    \draw let \p1=(PE21Part) in node (1PE2PartUP)   at(\x1 + 1.88cm, \y1) {};
    \draw let \p1=(PE22Part) in node (1PE2PartDOWN) at(\x1 + 1.88cm, \y1) {};

    \draw (0PE2Part.center) -- (PE21Part.center) -- (PE24Part.center) -- (1PE2Part.center);
    \draw (0PE2Part.center) -- (1PE2Part.center) -- (PE23Part.center) -- (PE22Part.center) -- (0PE2Part.center);

    \path [name path=path1](0PE2PartUP.center) -- (0PE2PartDOWN.center);
    \path [name path=path2](1PE2PartUP.center) -- (1PE2PartDOWN.center);
    \draw [name path=diag](0PE2Part.center) -- (1PE2Part.center);
    \path [name intersections={of=path1 and diag,by=E}];
    \path [name intersections={of=path2 and diag,by=F}];
    \draw [fill=\red](0PE2Part.center) -- (E.center) -- (0PE2PartDOWN.center) -- (PE22Part.center) -- (0PE2Part.center);
    \draw [fill=\blue](0PE2PartDOWN.center) -- (E.center) -- (F.center) -- (1PE2PartDOWN.center) -- (0PE2PartDOWN.center);
    \draw [fill=\darkgreen](1PE2PartDOWN.center) -- (F.center) -- (1PE2Part.center) -- (PE23Part.center) -- (0PE2PartDOWN.center);
    %
    \draw node (0PE3Part) [above =-0.5*\y                  of PE32Part] {};
    \draw node (1PE3Part) at (PE34Part) {};
    \draw let \p1=(PE31Part) in node (0PE3PartUP)   at(\x1 + 0.45cm, \y1) {};
    \draw let \p1=(PE32Part) in node (0PE3PartDOWN) at(\x1 + 0.45cm, \y1) {};
    \draw let \p1=(PE31Part) in node (1PE3PartUP)   at(\x1 + 0.88cm, \y1) {};
    \draw let \p1=(PE32Part) in node (1PE3PartDOWN) at(\x1 + 0.88cm, \y1) {};

    \draw (0PE3Part.center) -- (PE31Part.center) -- (PE34Part.center) -- (1PE3Part.center);
    \draw (0PE3Part.center) -- (1PE3Part.center) -- (PE33Part.center) -- (PE32Part.center) -- (0PE3Part.center);

    \path [name path=path1](0PE3PartUP.center) -- (0PE3PartDOWN.center);
    \path [name path=path2](1PE3PartUP.center) -- (1PE3PartDOWN.center);
    \draw [name path=diag](0PE3Part.center) -- (1PE3Part.center);
    \path [name intersections={of=path1 and diag,by=E}];
    \path [name intersections={of=path2 and diag,by=F}];
    \draw [fill=\red](0PE3Part.center) -- (E.center) -- (0PE3PartDOWN.center) -- (PE32Part.center) -- (0PE3Part.center);
    \draw [fill=\blue](0PE3PartDOWN.center) -- (E.center) -- (F.center) -- (1PE3PartDOWN.center) -- (0PE3PartDOWN.center);
    \draw [fill=\darkgreen](1PE3PartDOWN.center) -- (F.center) -- (1PE3Part.center) -- (PE33Part.center) -- (0PE3PartDOWN.center);

    \draw node (buckets) [bigRectangle, below = \belowStringExchange of PE2]  {all-to-all character and LCP exchange};

    \draw [->] (PE1MiddlePart|-PE12Part) -- (PE1MiddlePart|-buckets.north);
    \draw [->] (PE2MiddlePart|-PE22Part) -- (PE2MiddlePart|-buckets.north);
    \draw [->] (PE3MiddlePart|-PE32Part) -- (PE3MiddlePart|-buckets.north);

    \draw [<-] (PE1MiddlePart|-PE11Part) -- (PE1MiddlePart|-partitioner.south);
    \draw [<-] (PE2MiddlePart|-PE21Part) -- (PE2MiddlePart|-partitioner.south);
    \draw [<-] (PE3MiddlePart|-PE31Part) -- (PE3MiddlePart|-partitioner.south);

    \draw [->] (PE1MiddleLS|-PE12LS) -- (PE1MiddlePart|-partitioner.north);
    \draw [->] (PE2MiddleLS|-PE22LS) -- (PE2MiddlePart|-partitioner.north);
    \draw [->] (PE3MiddleLS|-PE32LS) -- (PE3MiddlePart|-partitioner.north);

    \draw [<-] (PE1MiddleLS|-PE11LS) -- (PE1Middle|-PE12) node[above, inner sep=1pt, pos=0.2, fill=white] {sort locally};
    \draw [<-] (PE2MiddleLS|-PE21LS) -- (PE2Middle|-PE22) node[above, inner sep=1pt, pos=0.2, fill=white] {sort locally};
    \draw [<-] (PE3MiddleLS|-PE31LS) -- (PE3Middle|-PE32) node[above, inner sep=1pt, pos=0.2, fill=white] {sort locally};

    %
    %

    \draw node [below = \belowLocallySortedSmaller of PE1MiddlePart](PE1MiddleEx) {};
    \draw node [below = \belowLocallySortedSmaller of PE2MiddlePart](PE2MiddleEx) {};
    \draw node [below = \belowLocallySortedSmaller of PE3MiddlePart](PE3MiddleEx) {};

    \draw node (PE11Ex) [above left  =\y and \x of PE1MiddleEx] {};
    \draw node (PE12Ex) [below left  =\y and \x of PE1MiddleEx] {};
    \draw node (PE13Ex) [below right =\y and \x of PE1MiddleEx] {};
    \draw node (PE14Ex) [above right =\y and \x of PE1MiddleEx] {};

    \draw node (PE21Ex) [above left  =\y and \x of PE2MiddleEx] {};
    \draw node (PE22Ex) [below left  =\y and \x of PE2MiddleEx] {};
    \draw node (PE23Ex) [below right =\y and \x of PE2MiddleEx] {};
    \draw node (PE24Ex) [above right =\y and \x of PE2MiddleEx] {};

    \draw node (PE31Ex) [above left  =\y and \x of PE3MiddleEx] {};
    \draw node (PE32Ex) [below left  =\y and \x of PE3MiddleEx] {};
    \draw node (PE33Ex) [below right =\y and \x of PE3MiddleEx] {};
    \draw node (PE34Ex) [above right =\y and \x of PE3MiddleEx] {};

    \draw [<-] (PE1MiddleEx|-PE11Ex) -- (PE1MiddleEx|-buckets.south);
    \draw [<-] (PE2MiddleEx|-PE21Ex) -- (PE2MiddleEx|-buckets.south);
    \draw [<-] (PE3MiddleEx|-PE31Ex) -- (PE3MiddleEx|-buckets.south);

    %
    %

    \draw node [below = \belowLocallySorted of PE1MiddlePart](PE1MiddleEx) {};
    \draw node [below = \belowLocallySorted of PE2MiddlePart](PE2MiddleEx) {};
    \draw node [below = \belowLocallySorted of PE3MiddlePart](PE3MiddleEx) {};

    \draw node (PE11Ex) [above left  =\y and \x of PE1MiddleEx] {};
    \draw node (PE12Ex) [below left  =\y and \x of PE1MiddleEx] {};
    \draw node (PE13Ex) [below right =\y and \x of PE1MiddleEx] {};
    \draw node (PE14Ex) [above right =\y and \x of PE1MiddleEx] {};

    \draw node (PE21Ex) [above left  =\y and \x of PE2MiddleEx] {};
    \draw node (PE22Ex) [below left  =\y and \x of PE2MiddleEx] {};
    \draw node (PE23Ex) [below right =\y and \x of PE2MiddleEx] {};
    \draw node (PE24Ex) [above right =\y and \x of PE2MiddleEx] {};

    \draw node (PE31Ex) [above left  =\y and \x of PE3MiddleEx] {};
    \draw node (PE32Ex) [below left  =\y and \x of PE3MiddleEx] {};
    \draw node (PE33Ex) [below right =\y and \x of PE3MiddleEx] {};
    \draw node (PE34Ex) [above right =\y and \x of PE3MiddleEx] {};

    \def\xeins{0.55cm}
    \def\xzwei{0.85cm}
    \draw let \p1=(PE12Ex) in node (1PE1Ex) at(\xeins, \y1 + 0.3cm) {};
    \draw let \p1=(PE12Ex) in node (2PE1Ex) at(\xeins, \y1) {};
    \draw let \p1=(PE12Ex) in node (3PE1Ex) at(\xeins, \y1 + 0.2cm) {};
    \draw let \p1=(PE12Ex) in node (4PE1Ex) at(\xzwei, \y1 + 0.25cm) {};
    \draw let \p1=(PE12Ex) in node (5PE1Ex) at(\xzwei, \y1) {};
    \draw let \p1=(PE12Ex) in node (6PE1Ex) at(\xzwei, \y1 + 0.25cm) {};
    \draw let \p1=(PE13Ex) in node (7PE1Ex) at(\x1,  \y1 + 0.3cm) {};
    \draw (PE12Ex.center) -- (PE11Ex.center) -- (PE14Ex.center) -- (7PE1Ex.center);
    \draw [fill = \red](PE12Ex.center) -- (1PE1Ex.center) -- (2PE1Ex.center) -- (PE12Ex.center);
    \draw [fill = \red](3PE1Ex.center) -- (4PE1Ex.center) -- (5PE1Ex.center) -- (2PE1Ex.center) -- (3PE1Ex.center);
    \draw [fill = \red](6PE1Ex.center) -- (7PE1Ex.center) -- (PE13Ex.center) -- (5PE1Ex.center) -- (6PE1Ex.center);

    \def\xeins{0.45cm}
    \def\xzwei{1.90cm}
    \draw let \p1=(PE22Ex) in node (1PE2Ex) at(\x1, \y1 + 0.3cm) {};
    \draw let \p1=(PE22Ex) in node (2PE2Ex) at(\x1 + \xeins, \y1 + 0.35cm) {};
    \draw let \p1=(PE22Ex) in node (3PE2Ex) at(\x1 + \xeins, \y1) {};
    \draw let \p1=(PE22Ex) in node (4PE2Ex) at(\x1 + \xeins, \y1 + 0.30cm) {};
    \draw let \p1=(PE22Ex) in node (5PE2Ex) at(\x1 + \xzwei, \y1 +0.35cm) {};
    \draw let \p1=(PE22Ex) in node (6PE2Ex) at(\x1 + \xzwei, \y1) {};
    \draw let \p1=(PE22Ex) in node (7PE2Ex) at(\x1 + \xzwei, \y1 +0.30cm) {};
    \draw let \p1=(PE23Ex) in node (8PE2Ex) at(\x1,  \y1 + 0.35cm) {};
    \draw (1PE2Ex.center) -- (PE21Ex.center) -- (PE24Ex.center) -- (8PE2Ex.center);
    \draw [fill = \blue](1PE2Ex.center) -- (2PE2Ex.center) -- (3PE2Ex.center) -- (PE22Ex.center) -- (1PE2Ex.center);
    \draw [fill = \blue](4PE2Ex.center) -- (5PE2Ex.center) -- (6PE2Ex.center) -- (3PE2Ex.center) -- (4PE2Ex.center);
    \draw [fill = \blue](7PE2Ex.center) -- (8PE2Ex.center) -- (PE23Ex.center) -- (6PE2Ex.center) -- (7PE2Ex.center);

    \def\xeins{0.25cm}
    \def\xzwei{0.90cm}
    \draw let \p1=(PE32Ex) in node (1PE3Ex) at(\x1, \y1 + 0.35cm) {};
    \draw let \p1=(PE32Ex) in node (2PE3Ex) at(\x1 + \xeins, \y1 + 0.45cm) {};
    \draw let \p1=(PE32Ex) in node (3PE3Ex) at(\x1 + \xeins, \y1) {};
    \draw let \p1=(PE32Ex) in node (4PE3Ex) at(\x1 + \xeins, \y1 + 0.30cm) {};
    \draw let \p1=(PE32Ex) in node (5PE3Ex) at(\x1 + \xzwei, \y1 +0.5cm) {};
    \draw let \p1=(PE32Ex) in node (6PE3Ex) at(\x1 + \xzwei, \y1) {};
    \draw let \p1=(PE32Ex) in node (7PE3Ex) at(\x1 + \xzwei, \y1 +0.35cm) {};
    \draw let \p1=(PE34Ex) in node (8PE3Ex) at(\x1,  \y1) {};
    \draw (1PE3Ex.center) -- (PE31Ex.center) -- (PE34Ex.center) -- (8PE3Ex.center);
    \draw [fill = \darkgreen](1PE3Ex.center) -- (2PE3Ex.center) -- (3PE3Ex.center) -- (PE32Ex.center) -- (1PE3Ex.center);
    \draw [fill = \darkgreen](4PE3Ex.center) -- (5PE3Ex.center) -- (6PE3Ex.center) -- (3PE3Ex.center) -- (4PE3Ex.center);
    \draw [fill = \darkgreen](7PE3Ex.center) -- (8PE3Ex.center) -- (PE33Ex.center) -- (6PE3Ex.center) -- (7PE3Ex.center);

    %
    %

    \draw node [below = \belowInputE of PE1MiddleEx](PE1MiddleMerged) {};
    \draw node [below = \belowInputE of PE2MiddleEx](PE2MiddleMerged) {};
    \draw node [below = \belowInputE of PE3MiddleEx](PE3MiddleMerged) {};

    \draw node (PE11Merged) [above left  =\y and \x of PE1MiddleMerged] {};
    \draw node (PE12Merged) [below left  =\y and \x of PE1MiddleMerged] {};
    \draw node (PE13Merged) [below right =\y and \x of PE1MiddleMerged] {};
    \draw node (PE14Merged) [above right =\y and \x of PE1MiddleMerged] {};

    \draw node (PE21Merged) [above left  =\y and \x of PE2MiddleMerged] {};
    \draw node (PE22Merged) [below left  =\y and \x of PE2MiddleMerged] {};
    \draw node (PE23Merged) [below right =\y and \x of PE2MiddleMerged] {};
    \draw node (PE24Merged) [above right =\y and \x of PE2MiddleMerged] {};

    \draw node (PE31Merged) [above left  =\y and \x of PE3MiddleMerged] {};
    \draw node (PE32Merged) [below left  =\y and \x of PE3MiddleMerged] {};
    \draw node (PE33Merged) [below right =\y and \x of PE3MiddleMerged] {};
    \draw node (PE34Merged) [above right =\y and \x of PE3MiddleMerged] {};

    \draw [<-] (PE1MiddleMerged|-PE11Merged) -- (PE1MiddleEx|-PE12Ex) node[above, inner sep=1pt, pos=0.2, fill=white] {LCP-merge};
    \draw [<-] (PE2MiddleMerged|-PE21Merged) -- (PE2MiddleEx|-PE22Ex) node[above, inner sep=1pt, pos=0.2, fill=white] {LCP-merge};
    \draw [<-] (PE3MiddleMerged|-PE31Merged) -- (PE3MiddleEx|-PE32Ex) node[above, inner sep=1pt, pos=0.2, fill=white] {LCP-merge};

    \draw let \p1=(PE13Merged) in node (1PE1Merged) at(\x1, \y1 + 0.3cm) {};
    \draw (PE12Merged.center) -- (PE11Merged.center) -- (PE14Merged.center) -- (1PE1Merged.center);
    \draw [fill = \red](PE12Merged.center) -- (1PE1Merged.center) -- (PE13Merged.center) -- (PE12Merged.center);

    \draw let \p1=(PE22Merged) in node (1PE2Merged) at(\x1, \y1 + 0.3cm) {};
    \draw let \p1=(PE23Merged) in node (2PE2Merged) at(\x1, \y1 + 0.35cm) {};
    \draw (1PE2Merged.center) -- (PE21Merged.center) -- (PE24Merged.center) -- (2PE2Merged.center);
    \draw [fill = \blue](1PE2Merged.center) -- (2PE2Merged.center) -- (PE23Merged.center) -- (PE22Merged.center) -- (1PE2Merged.center);

    \draw let \p1=(PE32Merged) in node (1PE3Merged) at(\x1, \y1 + 0.35cm) {};
    \draw let \p1=(PE34Merged) in node (2PE3Merged) at(\x1, \y1) {};
    \draw (1PE3Merged.center) -- (PE31Merged.center) -- (PE34Merged.center) -- (2PE3Merged.center);
    \draw [fill = \darkgreen](1PE3Merged.center) -- (2PE3Merged.center) -- (PE33Merged.center) -- (PE32Merged.center) -- (1PE3Merged.center);

  \end{tikzpicture}
  \caption{Standard distributed mergesort scheme, which we augmented in every step with string-specific optimizations.}\label{fig:mergesort scheme}
\end{figure}

\begin{proof}
The term $\Nmax\log\sigma$ is due to the initial random placement of the input.
We assume here that afterwards the pivot selection ensures (in expectation) that
in each iteration, each PE works on $\Oh{n/p}$ strings in
expectation. See \cite{axtmann2017robust} for the details which
transfer from the atomic algorithm. We make the conservative
assumption that each string incurs work and communication volume
$\lmax$ in each iteration. Similarly, we assume that local sorting
takes time $\Oh{n\lmax/p\log n}$. The term $\lmax\log\sigma\log^2p$ in
the communication volume stems from the reduction operation that, in
each iteration, needs to transmit up to $\lmax$ characters along a
reduction tree of logarithmic depth.
\end{proof}

\section{Distributed String Merge Sort}\label{s:DMSS}

Algorithm~\dMSS is based on the standard mergesort scheme (see Figure~\ref{fig:mergesort scheme}) for
distributed memory but we need to augment it in every step with
string-specific optimizations.  Each PE $i$ starts with a string array
$\mathcal{S}_i$ as input and the goal is to sort the union
$\StringSet{S}$ of all inputs such that afterwards strings of PE $i$
are larger than those on PE $i-1$, smaller than those on PE $i+1$, and locally sorted.
We also output the LCP array.  The \dMSS algorithm follows
the following four steps (see Fig.~\ref{fig:algo-MS} for an illustration):
\begin{enumerate}
\item Sort the string set locally using a sequential string sorting algorithm which also saves the local LCP array (see Section~\ref{s:seqsort} for details).
\item Determine $p-1$ global splitters $f_1,\ldots,f_{p-1}$ such that PE $i$ gets bucket $b_i$ containing all strings $s$ with $f_{i} < s \leq f_{i+1}$ assuming sentinels $f_0 = -\infty$ and $f_p = +\infty$.
\item Perform an all-to-all exchange of string and LCP data, optionally applying LCP compression.
\item Merge the $p$ received sorted subsequences locally with our efficient LCP-aware loser tree.
\end{enumerate}

The following subsections discuss details of these steps.

\begin{figure}
  \centering
  \begin{tikzpicture}[
    scale=0.36, yscale=1.2,
    letter/.style={
      font=\ttfamily, anchor=base,
      black, inner sep=0,
      minimum width=3.6mm, text height=2.9mm, text depth=0.7mm},
    lcpv/.style={
      anchor=base, inner sep=0,
      font={\small},
    },
    ]

    \def\drawstring#1#2#3{
      \ifx&#2&\else
        \begin{scope}[on background layer]
          \fill[blue!15] (-0.5,-#1+1.2) rectangle (#2-0.5,-#1+0.4);
        \end{scope}
      \fi
      \foreach \c [count=\x from 0] in {#3} {
        \ifx&#2&
          \node[letter] (L#1-\x) at (\x,-#1) {\c};
        \else
        \ifthenelse{\x<#2}{
          \node[letter] (L#1-\x) at (\x,-#1) {\c};
        }{
          \node[letter] (L#1-\x) at (\x,-#1) {\c};
        }
        \fi
        \xdef\xmax{\x}
      }
      \draw (-0.5,-#1-0.16) rectangle (\xmax+0.5,-#1+0.66);
      \node[lcpv] at (-0.9,-#1) {#2};
    }

    \begin{scope}
      \begin{scope}
        \drawstring{0}{}{a,l,p,h,a};
        \drawstring{1}{}{o,r,d,e,r};
        \drawstring{2}{}{a,l,p,s};
        \drawstring{3}{}{a,l,g,a,e};
        \draw (-1.5,1) rectangle (6,-3.6);
      \end{scope}
      \begin{scope}[xshift=8cm]
        \drawstring{0}{}{s,o,r,t,e,r};
        \drawstring{1}{}{s,n,o,w};
        \drawstring{2}{}{a,l,g,o};
        \drawstring{3}{}{s,o,r,b,e,t};
        \draw (-1.5,1) rectangle (6,-3.6);
      \end{scope}
      \begin{scope}[xshift=16cm]
        \drawstring{0}{}{s,o,r,t,e,d};
        \drawstring{1}{}{o,r,a,n,g,e};
        \drawstring{2}{}{s,o,u,l};
        \drawstring{3}{}{o,r,g,a,n};
        \draw (-1.5,1) rectangle (6,-3.6);
      \end{scope}

      \node at (10,-4.55) [anchor=base] {Step 1: sort locally with LCP array output};
      \coordinate (A) at (0, -4.6);
    \end{scope}

    \begin{scope}[shift=(A),yshift=-14mm]
      \begin{scope}
        \drawstring{0}{}{a,l,g,a,e};
        \drawstring{1}{2}{a,l,p,h,a};
        \drawstring{2}{3}{a,l,p,s};
        \drawstring{3}{0}{o,r,d,e,r};
        \draw (-1.5,1) rectangle (6,-3.6);
      \end{scope}
      \begin{scope}[xshift=8cm]
        \drawstring{0}{}{a,l,g,o};
        \drawstring{1}{0}{s,n,o,w};
        \drawstring{2}{1}{s,o,r,b,e,t};
        \drawstring{3}{3}{s,o,r,t,e,r};
        \draw (-1.5,1) rectangle (6,-3.6);
      \end{scope}
      \begin{scope}[xshift=16cm]
        \drawstring{0}{}{o,r,a,n,g,e};
        \drawstring{1}{2}{o,r,g,a,n};
        \drawstring{2}{0}{s,o,r,t,e,d};
        \drawstring{3}{2}{s,o,u,l};
        \draw (-1.5,1) rectangle (6,-3.6);
      \end{scope}

      \node at (10,-4.55) [anchor=base] {Step 2: sample regularly: \{ \texttt{alpha}, \texttt{snow}, \texttt{organ} \}, };
      \node at (10,-5.55) [anchor=base] {select splitters: \{ \texttt{alpha}, \texttt{organ} \}, and find splits.};
      \coordinate (A) at (0, -4.6);
    \end{scope}

    \begin{scope}[shift=(A),yshift=-24mm]
      \tikzset{lcpv/.append style={white}}
      \begin{scope}
        \drawstring{0}{}{a,l,g,a,e};
        \drawstring{1}{2}{a,l,p,h,a};
        \drawstring{2}{3}{a,l,p,s};
        \drawstring{3}{0}{o,r,d,e,r};
        \draw (-1.5,1) rectangle (6,-3.6);
        \draw [semithick,green!50!black] (-0.7,-1.25) -- +(6.4,0);
        \draw [semithick,green!50!black] (-0.7,-3.25) -- +(6.4,0);
      \end{scope}
      \begin{scope}[xshift=8cm]
        \drawstring{0}{}{a,l,g,o};
        \drawstring{1}{0}{s,n,o,w};
        \drawstring{2}{1}{s,o,r,b,e,t};
        \drawstring{3}{3}{s,o,r,t,e,r};
        \draw (-1.5,1) rectangle (6,-3.6);
        \draw [semithick,green!50!black] (-0.7,-0.2) -- +(6.4,0);
        \draw [semithick,green!50!black] (-0.7,-0.3) -- +(6.4,0);
      \end{scope}
      \begin{scope}[xshift=16cm]
        \drawstring{0}{}{o,r,a,n,g,e};
        \drawstring{1}{2}{o,r,g,a,n};
        \drawstring{2}{0}{s,o,r,t,e,d};
        \drawstring{3}{2}{s,o,u,l};
        \draw (-1.5,1) rectangle (6,-3.6);
        \draw [semithick,green!50!black] (-0.7,0.75) -- +(6.4,0);
        \draw [semithick,green!50!black] (-0.7,-1.25) -- +(6.4,0);
      \end{scope}

      \node at (10,-4.55) [anchor=base] {Step 3: all-to-all exchange with LCP compression};
      \coordinate (A) at (0, -4.6);
    \end{scope}

    \begin{scope}[shift=(A),yshift=-14mm]
      \begin{scope}
        \drawstring{0}{}{a,l,g,a,e};
        \drawstring{1}{2}{-,-,p,h,a};
        \drawstring{2}{}{a,l,g,o};
        \draw (-1.5,1) rectangle (6,-4.6);
        \draw [semithick,green!50!black] (-0.7,-1.25) -- +(6.4,0);
        \draw [semithick,green!50!black] (-0.7,-2.25) -- +(6.4,0);
      \end{scope}
      \begin{scope}[xshift=8cm]
        \drawstring{0}{}{a,l,p,s};
        \drawstring{1}{0}{o,r,d,e,r};
        \drawstring{2}{}{o,r,a,n,g,e};
        \drawstring{3}{2}{-,-,g,a,n};
        \draw (-1.5,1) rectangle (6,-4.6);
        \draw [semithick,green!50!black] (-0.7,-1.2) -- +(6.4,0);
        \draw [semithick,green!50!black] (-0.7,-1.3) -- +(6.4,0);
      \end{scope}
      \begin{scope}[xshift=16cm]
        \drawstring{0}{}{s,n,o,w};
        \drawstring{1}{1}{-,o,r,b,e,t};
        \drawstring{2}{3}{-,-,-,t,e,r};
        \drawstring{3}{}{s,o,r,t,e,d};
        \drawstring{4}{2}{-,-,u,l};
        \draw (-1.5,1) rectangle (6,-4.6);
        \draw [semithick,green!50!black] (-0.7,0.75) -- +(6.4,0);
        \draw [semithick,green!50!black] (-0.7,-2.25) -- +(6.4,0);
      \end{scope}

      \node at (10,-5.55) [anchor=base] {Step 4: multiway merge locally with LCP loser tree};
      \coordinate (A) at (0, -5.6);
    \end{scope}

    \begin{scope}[shift=(A),yshift=-14mm]
      \tikzset{lcpv/.append style={white}}
      \begin{scope}
        \drawstring{0}{}{a,l,g,a,e};
        \drawstring{1}{3}{a,l,g,o};
        \drawstring{2}{2}{a,l,p,h,a};
        \draw (-1.5,1) rectangle (6,-4.6);
      \end{scope}
      \begin{scope}[xshift=8cm]
        \drawstring{0}{}{a,l,p,s};
        \drawstring{1}{0}{o,r,a,n,g,e};
        \drawstring{2}{2}{o,r,d,e,r};
        \drawstring{3}{2}{o,r,g,a,n};
        \draw (-1.5,1) rectangle (6,-4.6);
      \end{scope}
      \begin{scope}[xshift=16cm]
        \drawstring{0}{}{s,n,o,w};
        \drawstring{1}{1}{s,o,r,b,e,t};
        \drawstring{2}{3}{s,o,r,t,e,d};
        \drawstring{3}{5}{s,o,r,t,e,r};
        \drawstring{4}{2}{s,o,u,l};
        \draw (-1.5,1) rectangle (6,-4.6);
      \end{scope}
    \end{scope}

  \end{tikzpicture}
  \caption{\label{fig:algo-MS}Steps of Algorithm \dMSS shown on example strings.
    The small numbers and shaded blue area after Step 1 are the calculated
    LCPs. The green lines after Step 2 are the two splitting positions. In Step
    3, characters shown as ``\texttt{-}'' are omitted due to LCP compression.}
\end{figure}

\subsection{String-Based or Character-Based Partitioning}

When determining the splitters $f_i$ in step 2) the goal is to balance the result among all PEs.
In the case of strings, this can mean to balance the number of strings or the number of characters each PE receives.
For \dMSS we thus devised a \emph{string}-based and an alternative \emph{character}-based partitioning step to determine splitters.
Both assume a given oversampling parameter $v$.
Furthermore, because we sort the local string sets $\StringSet{S}_i$ in step 1), we can use \emph{regular} sampling~\cite{shi1992parallel,li1993versatility} instead of randomly selecting samples.

String-based partitioning performs the following steps:
\begin{enumerate}
\item Each PE $i$ chooses $v$ evenly spaced samples $\StringSet{V}_i$ from its strings $\StringSet{S}_i$.
  Assuming $|\StringSet{S}_i|$ is divisible by $v + 1$, then we can choose the strings $\StringSet{S}_i[\omega j - 1]$ with $\omega = |\StringSet{S}_i|/(v+1)$ for $j = 1,\ldots,v$.
\item The $pv$ samples are globally sorted into array $\StringSet{V}$. Then, $p-1$ splitters $f_i = \StringSet{V}[vj - 1]$ are selected for $i = 1,\ldots,p-1$.
This sorting and selection can be implemented trivially by sending all samples to one PE, or using distributed sorting and selection algorithms. In both cases the complete set of splitters is communicated to all PEs.
\end{enumerate}

To prove that buckets are well-balanced, we first show a lemma reformulating the density of samples in a subsequence.

\begin{lemma}\label{lem:string-subarray}
  For $i = 1,\ldots,p$ let $\StringSet{S}'_i = \{s \in \StringSet{S}_i \mid a \leq s \leq b\}$ be an arbitrary contiguous subarray of $\StringSet{S}_i$. If $|\StringSet{S}'_i \cap \StringSet{V}_i| = k$, then $|\StringSet{S}'_i| \leq (k+1) \omega$ with $\omega = |\StringSet{S}_i|/(v+1)$.
\end{lemma}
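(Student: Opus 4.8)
The plan is to ignore the strings entirely and reason about index ranges of the sorted local array $\StringSet{S}_i$, exploiting the equidistant placement of the samples.

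Put $\omega = |\StringSet{S}_i|/(v+1)$, which is a positive integer by the divisibility assumption, and index $\StringSet{S}_i$ as $\StringSet{S}_i[0],\dots,\StringSet{S}_i[(v+1)\omega-1]$, so the samples $\StringSet{V}_i$ occupy exactly the positions $\omega-1,2\omega-1,\dots,v\omega-1$. I would picture these positions as cutting $\StringSet{S}_i$ into $v+1$ consecutive \emph{blocks} $B_1,\dots,B_{v+1}$, each of exactly $\omega$ positions, with $B_j=\{(j-1)\omega,\dots,j\omega-1\}$; then for $1\le j\le v$ the $j$-th sample is the last element of $B_j$, whereas the final block $B_{v+1}$ holds no sample. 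Since $\StringSet{S}_i$ is sorted, $\StringSet{S}'_i$ occupies a contiguous index range $[\ell,r]$, so $|\StringSet{S}'_i|=r-\ell+1$ and $k$ is the number of sample positions lying in $[\ell,r]$.

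The argument then rests on two observations. First, the samples contained in $[\ell,r]$ sit in \emph{consecutive} blocks: if $[\ell,r]$ contains the last elements of $B_{j'}$ and $B_{j''}$ with $j'<j''$ then, being an interval, it contains every position in between, hence the last element of $B_j$ for all $j'\le j\le j''$. So the $k$ samples in $[\ell,r]$ are exactly the last elements of a run $B_{j_0},B_{j_0+1},\dots,B_{j_0+k-1}$ with $j_0+k-1\le v$. Second, $[\ell,r]$ can overhang this run by at most one block on either side: it cannot touch the sample at the end of $B_{j_0-1}$ without contradicting the minimality of $j_0$, so $\ell\ge(j_0-1)\omega$ (trivially so when $j_0=1$); and it cannot pass the last position of $B_{j_0+k}$, because if $j_0+k\le v$ that position is the sample ending $B_{j_0+k}$ (contradicting that $B_{j_0+k-1}$ is the last block of the run), and if $j_0+k=v+1$ it is simply the end of $\StringSet{S}_i$, so $r\le(j_0+k)\omega-1$. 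Hence $[\ell,r]\subseteq B_{j_0}\cup\dots\cup B_{j_0+k}$, a set of $(k+1)\omega$ positions, and $|\StringSet{S}'_i|\le(k+1)\omega$.

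The case $k=0$ I would dispatch separately and even more simply: a contiguous range meeting no sample cannot cross the right boundary of any $B_j$ with $j\le v$, so it lies inside a single block and has at most $\omega=(0+1)\omega$ elements. The only point that needs real care, and the reason one has to argue through the extreme samples and the explicit endpoints $(j_0-1)\omega$ and $(j_0+k)\omega-1$ rather than by merely counting how many blocks $[\ell,r]$ meets (which would only give $(k+2)\omega$), is the anomalous last block $B_{v+1}$: it carries no sample, so a subarray running to the very end of $\StringSet{S}_i$ absorbs an extra block of $\omega$ positions at no cost to $k$. This is less an obstacle than the tight case of the lemma -- the bound $(k+1)\omega$ is precisely what makes that configuration admissible -- but it does dictate the shape of the proof.
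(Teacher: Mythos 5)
Your proof is correct and rests on essentially the same idea as the paper's: the equidistant samples cut $\StringSet{S}_i$ into segments of $\omega$ positions, and a contiguous range containing $k$ samples cannot reach past the neighbouring samples on either side, so it is confined to $k+1$ consecutive segments of total size $(k+1)\omega$. The paper presents this as a case split on $k$ with a split-at-the-samples-and-sum bound, whereas you present it as an index-interval containment; your version is in fact slightly more careful about the sample-free final block, where the paper's $k=0$ claim of $|\StringSet{S}'_i|<\omega$ should strictly read $|\StringSet{S}'_i|\leq\omega$ (harmless for the stated bound).
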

\begin{proof}
  If $k=0$, then all elements of $\StringSet{S}'_i$ are fully contained between two consecutive sample elements of $\StringSet{V}_i$, thus $|\StringSet{S}'_i| < \omega$.
  If $k=1$, then let $x$ be the element in $\StringSet{S}'_i \cap \StringSet{V}_i$ and we have $\StringSet{S}'_i = \StringSet{S}'_{i,<} \cup \{ x \} \cup \StringSet{S}'_{i,>}$. For $\StringSet{S}'_{i,<}$ and $\StringSet{S}'_{i,>}$ the case $k=0$ applies and thus we have $|\StringSet{S}'_i| \leq (\omega - 1) + 1 + (\omega - 1) \leq 2 \omega$.
  If $k \geq 2$, then we can split $\StringSet{S}'_i$ into $(k+1)$ parts such that the first and last contain at most $\omega-1$ elements and the others exactly $\omega - 1$ between the splitters. Thus $|\StringSet{S}'_i| \leq k ((\omega - 1) + 1) + (\omega - 1) \leq (k+1) \omega$.
\end{proof}

\begin{theorem}\label{thm:string-splitting}
  All buckets $b_j$ contain at most $\frac{n}{p} + \frac{n}{v}$ elements.
\end{theorem}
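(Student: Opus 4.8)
The plan is to partition each bucket $b_j$ into the contributions of the individual PEs, bound each contribution with Lemma~\ref{lem:string-subarray}, and add up. Fix a bucket $b_j=\{\,s\in\StringSet{S}\mid f_j<s\le f_{j+1}\,\}$ (with the sentinels $f_0=-\infty$, $f_p=+\infty$) and set $\StringSet{S}'_i=\StringSet{S}_i\cap b_j$ for each PE $i$. Because $\StringSet{S}_i$ is sorted and $b_j$ is a key interval, $\StringSet{S}'_i$ is a contiguous subarray of $\StringSet{S}_i$, so Lemma~\ref{lem:string-subarray} gives $|\StringSet{S}'_i|\le(k_i+1)\omega$, where $k_i:=|\StringSet{S}'_i\cap\StringSet{V}_i|$ counts the regular samples of PE $i$ falling into $b_j$. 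Summing over the $p$ PEs, $|b_j|=\sum_{i=1}^{p}|\StringSet{S}'_i|\le\sum_{i=1}^{p}(k_i+1)\omega=(K_j+p)\,\omega$, where $K_j:=\sum_{i=1}^{p}k_i$ is the number of global samples of $\StringSet{V}$ that lie in $b_j$.

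The second step is to establish $K_j\le v$. This is the place where the regular choice of splitters is used: the splitters $f_j=\StringSet{V}[vj-1]$ cut the sorted sample array $\StringSet{V}$ of length $pv$ into $p$ consecutive blocks of exactly $v$ samples, and — provided the $pv$ samples are pairwise distinct — the samples contained in $b_j$ are precisely those of the $j$-th such block, so $K_j=v$. Combined with the first step this yields $|b_j|\le(v+p)\,\omega$. To make the samples distinct one breaks ties between equal strings by, say, the originating PE index and the local position; I would carry the proof out under this convention.

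Finally, plugging in $\omega=|\StringSet{S}_i|/(v+1)=(n/p)/(v+1)$ — using that each PE holds $n/p$ of the $n$ input strings — gives $|b_j|\le\frac{(v+p)\,n}{p\,(v+1)}=\frac{n}{v+1}+\frac{v\,n}{p\,(v+1)}<\frac{n}{v}+\frac{n}{p}$, which is the claimed bound. I expect the one genuinely delicate point to be the inequality $K_j\le v$: with true duplicate strings a run of equal splitter values empties some buckets but can overfill the one just before it, so the argument tacitly relies on a strict order on the samples, and it also uses that the local inputs are balanced so that one value $\omega$ serves every PE. Dropping the balance assumption, the same computation still gives the weaker per-bucket bound $\frac{n}{v+1}+\frac{v\,\nmax}{v+1}$, since $\sum_i\omega_i=n/(v+1)$ and $\sum_i k_i\omega_i\le v\max_i\omega_i$.
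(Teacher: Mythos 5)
Your proof is correct and follows essentially the same route as the paper's: decompose the bucket into per-PE contiguous pieces, bound each via Lemma~\ref{lem:string-subarray}, use that exactly $v$ global samples fall between consecutive splitters so the sample counts sum to $v$, and add up with $\omega = n/(p(v+1))$. Your explicit attention to tie-breaking among equal sample strings and to the tacit balance assumption $|\StringSet{S}_i| = n/p$ is a welcome sharpening of details the paper leaves implicit.
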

\begin{proof}
  Let $\mathcal{B}_i^j := \{ s \in \StringSet{S}_i \mid f_{j-1} < s \leq f_j \}$ be the elements in bucket $b_j$ on PE $i$, $\mathcal{V}_i^j := \{ s \in \mathcal{V}_i \mid f_{j-1} < s \leq f_j \}$ the samples therein, and $v_i^j := |\mathcal{V}_i^j|$ their number.
  By definition $|\mathcal{B}_i^j \cap \mathcal{V}_i| = v_i^j$ and by applying Lemma~\ref{lem:string-subarray} we get $|\mathcal{B}_i^j| \leq (v_i^j + 1) \omega$.
  Since $f_{j-1}$ and $f_j$ are globally separated by $v-1$ samples, $\sum_{i=1}^p v_i^j = v$.
  We can now bound $b_j$ by summing over all PEs:
  $|b_j| = \sum_{i=1}^p |\mathcal{B}_i^j| \leq \sum_{i=1}^p (v_i^j + 1) \omega
  = \omega (v + p) = \frac{|\StringSet{S}_i|}{(v+1)} (v + p)
  = \frac{|\StringSet{S}|}{(v+1)p} (v + p)
  < \frac{|\StringSet{S}|}{vp}(v+p)
  = \frac{n}{v} + \frac{n}{p}$\,.
\end{proof}

For character-based partitioning, we have to switch our focus from the string arrays $\StringSet{S}_i$ to the underlying character arrays $\mathcal{C}(\StringSet{S}_i)$.
For simplicity assume that $|\mathcal{C}(\StringSet{S}_i)|$ is divisible by $v + 1$ and let $\omega' = |\mathcal{C}(\StringSet{S}_i)|/(v+1)$.
We furthermore assume $\lmax \leq \omega'$, otherwise strings are very long and too few to draw $v$ samples.

For character-based partitioning, each PE $i$ again chooses $v$ sample strings $\StringSet{V}_i$ from its string set, but this time the strings are regularly sampled such that $\mathcal{C}(\StringSet{S}_i)$ is evenly spaced between them.
For this chose the first strings \emph{starting at or following} the characters at ranks $j \omega' - 1$ in $\mathcal{C}(\StringSet{S}_i)$ for $j = 1,\ldots,v$.
This can be efficiently calculated by keeping an array containing the length of each string while sorting $\StringSet{S}_i$ in Step~1.
As before, the $pv$ sample strings are globally sorted into $\mathcal{V}$ and $p-1$ splitters $f_i$ are selected.

The following lemma states that at most an imbalance of $\lmax$ is introduced due to the shift to the next string.
Using it we can then show a character-based lemma equivalent to Lemma~\ref{lem:string-subarray}.

\begin{lemma}\label{lem:char-partsize}
  If $\lmax \leq \omega'$, then the splitters $\mathcal{V}_i$ selected by character-based partitioning split $\StringSet{S}_i$ into $v+1$ non-empty local buckets $\StringSet{S}_i^j$.
  The number of characters in each bucket $\StringSet{S}_i^j$ is at most $\omega' + \lmax$.
\end{lemma}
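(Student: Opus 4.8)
The plan is to make the character ranks explicit and reduce everything to two elementary bounds on the starting positions of the sample strings. Write the sorted local array as $\StringSet{S}_i = [t_0,\dots,t_{m-1}]$ with $m = |\StringSet{S}_i|$, and let $c_k$ be the rank at which $t_k$ begins in $\chararray{\StringSet{S}_i}$, so that $c_0 = 0$, $c_{k+1} = c_k + |t_k|$, and $c_m = |\chararray{\StringSet{S}_i}| = (v+1)\omega'$. Let $r_j$ denote the $j$-th target rank for $j = 1,\dots,v$ (the paper uses $r_j = j\omega'-1$; the off-by-one is immaterial, the point being that consecutive targets differ by $\omega'$), and let $k_j \coloneqq \min\{\, k : c_k \ge r_j \,\}$ be the index of the $j$-th sample -- the first string starting at or after rank $r_j$. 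Setting $k_0 \coloneqq 0$ and $k_{v+1} \coloneqq m$, the bucket $\StringSet{S}_i^j$ consists of the strings at positions $k_{j-1},\dots,k_j-1$ and therefore contains exactly $c_{k_j} - c_{k_{j-1}}$ characters, so both claims of the lemma become statements about the numbers $c_{k_j}$.

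The single observation doing all the work is that the string just before the $j$-th sample begins strictly before the target rank: by minimality of $k_j$ we have $c_{k_j-1} < r_j$, hence
\[ r_j \le c_{k_j} = c_{k_j-1} + |t_{k_j-1}| < r_j + \lmax \le r_j + \omega', \]
where the final step is the only use of the hypothesis $\lmax \le \omega'$. Thus each interior sample starts inside a half-open window $[\,r_j,\ r_j+\omega'\,)$ of width $\omega'$.

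For non-emptiness I would argue: for $2 \le j \le v$ we get $c_{k_{j-1}} < r_{j-1} + \omega' = r_j \le c_{k_j}$, so $k_{j-1} < k_j$ by strict monotonicity of $k\mapsto c_k$; the first bucket is non-empty because $t_0$ starts at rank $0 < r_1$ (the sole exception, $\omega' = 1$, forces $\lmax = 1$, i.e.\ every string empty, and the statement is vacuous); and the last bucket is non-empty because $c_{m-1} = c_m - |t_{m-1}| \ge (v+1)\omega' - \lmax \ge v\omega' > r_v$, which simultaneously shows the minimum defining $k_v$ is attained at an index $\le m-1$, so $k_v$ is a legitimate string index. For the character bound: an interior bucket ($2 \le j \le v$) has $c_{k_j} - c_{k_{j-1}} < (r_j + \lmax) - r_{j-1} = \omega' + \lmax$ (upper bound $c_{k_j} < r_j+\lmax$, lower bound $c_{k_{j-1}} \ge r_{j-1}$); the first bucket has $c_{k_1} - c_0 < r_1 + \lmax \le \omega' + \lmax$; and the last bucket has $c_m - c_{k_v} \le (v+1)\omega' - r_v \le \omega' + \lmax$ (using $\lmax \ge 1$). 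Hence every $\StringSet{S}_i^j$ has at most $\omega' + \lmax$ characters.

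The argument itself is short; the parts needing genuine care are the two boundary buckets and confirming that each $k_j$ is a well-defined index in $\{0,\dots,m-1\}$, and all of these rest on the same fact that $\lmax \le \omega'$ prevents a single string's ``overshoot'' past a target rank from engulfing an entire window. I expect that bookkeeping -- keeping the rank offset consistent and pinning down non-emptiness at the two ends -- to be the fiddliest step, but it is entirely routine.
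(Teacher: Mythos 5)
Your proof is correct and follows the same route as the paper's: the paper's two observations---that the ideal split points are $\omega'$ apart with $\omega' \ge \lmax$ (making the shifted splitters distinct, hence the buckets non-empty) and that each split point moves right by less than $\lmax$ when advanced to the next string start (giving the $\omega' + \lmax$ character bound)---are precisely your window estimate $r_j \le c_{k_j} < r_j + \lmax$. You simply carry out the bookkeeping the paper's two-sentence proof omits, including the two boundary buckets and the degenerate case $\omega' = 1$; it is also worth noting that your half-open convention (each sample string opens the bucket to its right) is the reading under which the stated bound comes out cleanly.
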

\begin{proof}
  We have $\mathcal{S}_i^j := \{ s \in \StringSet{S}_i \mid \mathcal{V}_i[j-1] < s \leq \mathcal{V}_i[j] \}$ assuming $\mathcal{V}_i$ is sorted.
  Since the initially chosen equally spaced characters have a distance of $\omega' \geq \lmax$, the splitters in $\mathcal{V}_i$ are distinct and thus each $\mathcal{S}_i^j$ contains at least the splitter.
  On the other hand, the furthest possible shift from the character-based split point to the next string is $\lmax$, hence each bucket contains at most $\omega' + \lmax$ characters.
\end{proof}

\begin{lemma}\label{lem:char-subarray}
  For $i = 1,\ldots,p$ let $\StringSet{S}'_i = \{s \in \StringSet{S}_i \mid a \leq s \leq b\}$ be an arbitrary contiguous subarray of $\StringSet{S}_i$. If $|\StringSet{S}'_i \cap \StringSet{V}_i| = k$, then $|\mathcal{C}(\StringSet{S}'_i)| \leq (k+1) (\omega' + \lmax)$ with $\omega' = |\mathcal{C}(\StringSet{S}_i)|/(v+1)$.
\end{lemma}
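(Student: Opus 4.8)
The plan is to derive this lemma almost entirely from Lemma~\ref{lem:char-partsize}. That lemma already tells us that the sample strings $\StringSet{V}_i$ cut the sorted array $\StringSet{S}_i$ into $v+1$ consecutive non-empty local buckets $\StringSet{S}_i^1,\dots,\StringSet{S}_i^{v+1}$, each containing at most $\omega' + \lmax$ characters, and by the definition $\StringSet{S}_i^j = \{\, s \in \StringSet{S}_i \mid \StringSet{V}_i[j-1] < s \le \StringSet{V}_i[j] \,\}$ the sample $\StringSet{V}_i[j]$ is precisely the last string of $\StringSet{S}_i^j$ for $j = 1,\dots,v$. Hence the bound will follow at once from the purely combinatorial claim that a contiguous subarray $\StringSet{S}'_i$ with $|\StringSet{S}'_i \cap \StringSet{V}_i| = k$ can intersect at most $k+1$ of these buckets.

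To prove that claim I would first use that $\StringSet{S}_i$ is sorted, so that $\StringSet{S}'_i = \{\, s \in \StringSet{S}_i \mid a \le s \le b \,\}$ is an interval of $\StringSet{S}_i$ and the $k$ samples it contains occupy consecutive sample positions, say $\StringSet{V}_i[r],\dots,\StringSet{V}_i[r+k-1]$. I then show $\StringSet{S}'_i \subseteq \StringSet{S}_i^r \cup \dots \cup \StringSet{S}_i^{r+k}$: if some string of $\StringSet{S}'_i$ lay in a bucket $\StringSet{S}_i^t$ with $t < r$, then, since $\StringSet{S}'_i$ also contains $\StringSet{V}_i[r]$ (which is strictly larger) and $\StringSet{S}'_i$ is an interval of the sorted array, we would get $\StringSet{V}_i[t] \in \StringSet{S}'_i$ — an extra sample not among $\StringSet{V}_i[r],\dots,\StringSet{V}_i[r+k-1]$, a contradiction. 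Symmetrically, a string of $\StringSet{S}'_i$ in a bucket $\StringSet{S}_i^t$ with $t > r+k$ would force $\StringSet{V}_i[r+k] \in \StringSet{S}'_i$, again an extra sample (this case is vacuous when $r+k = v+1$, since there is no further bucket). The case $k = 0$ is just the degenerate instance in which $\StringSet{S}'_i$ lies inside a single bucket.

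It then remains only to add up: $\mathcal{C}(\StringSet{S}'_i)$ is the concatenation of the character arrays of at most $k+1$ of the buckets $\StringSet{S}_i^j$, each of size at most $\omega' + \lmax$ by Lemma~\ref{lem:char-partsize}, so $|\mathcal{C}(\StringSet{S}'_i)| \le (k+1)(\omega' + \lmax)$, which is the assertion, and which mirrors the counting in the proof of Lemma~\ref{lem:string-subarray}.

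The part that needs a little care — the main ``obstacle'', such as it is — is the bookkeeping at the two ends of the bucket range: one has to check that the ``extra sample'' argument still goes through when the first or last touched bucket is $\StringSet{S}_i^1$ or $\StringSet{S}_i^{v+1}$, whose outer boundary is a sentinel rather than a genuine sample, and to note that Lemma~\ref{lem:char-partsize} bounds \emph{every} one of the $v+1$ buckets uniformly by $\omega' + \lmax$, so that no separate treatment of the boundary buckets is needed. Everything else is a direct counting argument parallel to that for Lemma~\ref{lem:string-subarray}.
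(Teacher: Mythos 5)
Your proof is correct and takes essentially the same route as the paper: both reduce the claim to Lemma~\ref{lem:char-partsize}'s per-bucket bound of $\omega' + \lmax$ and then count, exactly as in Lemma~\ref{lem:string-subarray}, that a contiguous subarray containing $k$ samples meets at most $k+1$ of the local buckets. The paper merely compresses the counting step to ``analogous to the proof of Lemma~\ref{lem:string-subarray} with $\hat{\omega}$ taking the role of $\omega$,'' which you have simply written out in full.
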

\begin{proof}
  Let $\hat{\omega} := \omega' + \lmax$.
  If $k=0$, then all elements of $\StringSet{S}'_i$ are fully contained in one of the sets $\StringSet{S}'_i$, hence $|\mathcal{C}(\StringSet{S}'_i)| \leq \hat{\omega}$ by Lemma~\ref{lem:char-partsize}.
  The remaining proof for $k=1$ and $k \geq 2$ is analogous to the proof of Lemma~\ref{lem:string-subarray} with $\hat{\omega}$ taking the role of $\omega$ due to Lemma~\ref{lem:char-partsize}.
\end{proof}

With the two lemmas we can reiterate Theorem~\ref{thm:string-splitting} to bound the size of buckets for characters-based partitioning.

\begin{theorem}\label{thm:char-splitting}
  All buckets $b_j$ contain at most $\frac{N}{p} + \frac{N}{v} + (p+v)\lmax$ characters.
\end{theorem}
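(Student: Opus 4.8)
The plan is to re-run the counting argument from the proof of Theorem~\ref{thm:string-splitting}, but with Lemma~\ref{lem:char-subarray} playing the role of Lemma~\ref{lem:string-subarray}, under the standing assumption $\lmax \leq \omega'$ of the character-based section. Fix a bucket index $j$. On each PE $i$, let $\mathcal{B}_i^j := \{ s \in \StringSet{S}_i \mid f_{j-1} < s \leq f_j \}$ be the part of bucket $b_j$ residing on PE $i$; since $\StringSet{S}_i$ is sorted, $\mathcal{B}_i^j$ is a contiguous subarray. Let $\mathcal{V}_i^j := \{ s \in \mathcal{V}_i \mid f_{j-1} < s \leq f_j \}$ be the samples it contains and $v_i^j := |\mathcal{V}_i^j|$.

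First I would apply Lemma~\ref{lem:char-subarray} to $\mathcal{B}_i^j$ with $k = v_i^j$, obtaining $|\mathcal{C}(\mathcal{B}_i^j)| \leq (v_i^j + 1)(\omega' + \lmax)$. Next I would note that the global splitters $f_{j-1} = \mathcal{V}[v(j-1)-1]$ and $f_j = \mathcal{V}[vj-1]$ enclose exactly $v$ sample strings in the globally sorted sample array $\mathcal{V}$ (the edge buckets $j=1$ and $j=p$ behave the same, using $f_0 = -\infty$, $f_p = +\infty$), so $\sum_{i=1}^p v_i^j = v$, exactly as in Theorem~\ref{thm:string-splitting}. Summing the per-PE bounds over all $p$ PEs then gives $|\mathcal{C}(b_j)| = \sum_{i=1}^p |\mathcal{C}(\mathcal{B}_i^j)| \leq (\omega' + \lmax)\sum_{i=1}^p (v_i^j + 1) = (\omega' + \lmax)(v + p)$.

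Finally I would substitute $\omega' = |\mathcal{C}(\StringSet{S}_i)|/(v+1) = N/\big((v+1)p\big)$ — using, as in Theorem~\ref{thm:string-splitting}, that the input characters are distributed evenly over the PEs — and expand: $(\omega' + \lmax)(v+p) = \tfrac{N(v+p)}{(v+1)p} + (v+p)\lmax < \tfrac{N(v+p)}{vp} + (p+v)\lmax = \tfrac{N}{p} + \tfrac{N}{v} + (p+v)\lmax$, which is the claimed bound.

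I do not expect a real obstacle here: the one genuinely new effect, the shift of up to $\lmax$ characters incurred when a character rank is rounded up to the next complete string, is already absorbed into $\hat\omega = \omega' + \lmax$ by Lemmas~\ref{lem:char-partsize} and~\ref{lem:char-subarray}, and the rest is identical to the string-based argument. The only points that warrant care are (i) keeping $\omega'$ consistently interpreted (per-PE, then rewritten via $\sum_i |\mathcal{C}(\StringSet{S}_i)| = N$) and (ii) confirming that $\sum_i v_i^j = v$ transfers verbatim from the string-based case; both are bookkeeping rather than conceptual difficulties.
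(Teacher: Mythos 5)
Your proposal is correct and is essentially identical to the paper's proof, which likewise just reruns the counting argument of Theorem~\ref{thm:string-splitting} with Lemma~\ref{lem:char-subarray} in place of Lemma~\ref{lem:string-subarray}, sums $(v_i^j+1)(\omega'+\lmax)$ over the PEs using $\sum_i v_i^j = v$, and substitutes $\omega' = |\mathcal{C}(\StringSet{S})|/((v+1)p)$. The bookkeeping points you flag (the per-PE interpretation of $\omega'$ and the transfer of $\sum_i v_i^j = v$) are handled implicitly in the paper in exactly the way you describe.
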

\begin{proof}
  Applying Lemma~\ref{lem:char-subarray} with the same arguments as in the proof of Theorem~\ref{thm:string-splitting} yields
  $|b_j| = \sum_{i=1}^p |\mathcal{B}_i^j| \leq \sum_{i=1}^p (v_i^j + 1) (\omega' + \lmax) = (\omega' + \lmax) (v + p)
  = \frac{|\mathcal{C}(\StringSet{S}_i)|}{(v+1)} (v + p) + \lmax (v + p)
  = \frac{|\mathcal{C}(\StringSet{S})|}{(v+1)p} (v + p) + \lmax (v + p)
  < \frac{|\mathcal{C}(\StringSet{S})|}{vp}(v+p) + \lmax (v + p)
  = \frac{N}{v} + \frac{N}{p} + \lmax (v + p)$\,.
\end{proof}

\subsection{Data Exchange}

\begin{lemma}
  The data exchange phase of Algorithm~\dMSS (Step~3) with LCP compression has bottleneck communication volume
  $\Oh{(\Nmax + p \lmax) \log \sigma + \nmax \log \lmax}$ bits when character-based sampling is used.
\end{lemma}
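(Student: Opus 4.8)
The plan is to bound, PE by PE, the bits \emph{sent} and the bits \emph{received} during the all-to-all of Step~3 and then take the larger of the two. Recall the structure of this step: each PE $i$ holds its locally sorted array $\StringSet{S}_i$ together with its LCP array, cuts it into $p$ consecutive chunks determined by the splitters, and sends chunk $j$ to PE $j$; hence PE $j$ ends up with $p$ sorted chunks that concatenate to its bucket $b_j$. With LCP compression, inside each chunk every string except the first is represented by its LCP value with the chunk predecessor together with the characters beyond that common prefix, while the first string of a chunk is sent in full with LCP value $0$ --- exactly the convention the LCP loser tree of Step~4 expects at the start of an input run. Because strings are $0$-terminated, their lengths are implicit in the character stream, so no separate length field is transmitted.

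For the bits sent by PE $i$, the transmitted characters form a subsequence of the characters of $\StringSet{S}_i$ (compression only deletes characters), so there are at most $|\mathcal{C}(\StringSet{S}_i)| \le \Nmax$ of them, each coded in $\ceil{\log\sigma}$ bits; in addition PE $i$ emits one LCP value per string, an integer in $\{0,\dots,\lmax-1\}$ and thus $\Oh{\log\lmax}$ bits, over at most $\nmax$ strings. The send volume is therefore $\Oh{\Nmax\log\sigma + \nmax\log\lmax}$, already inside the claimed bound.

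For the bits received by PE $j$, the number of characters received is at most the number of characters in $b_j$ (compression can only reduce it), and Theorem~\ref{thm:char-splitting} bounds $b_j$ by $\tfrac{N}{p}+\tfrac{N}{v}+(p+v)\lmax$ characters; using $N\le p\Nmax$ and the usual oversampling regime $v=\Theta(p)$ this is $\Oh{\Nmax+p\lmax}$, i.e.\ $\Oh{(\Nmax+p\lmax)\log\sigma}$ bits. PE $j$ also receives one LCP value per string of $b_j$, and since at most $\nmax$ strings sit on any PE this adds $\Oh{\nmax\log\lmax}$ bits. (Exchanging the $p$ per-chunk element counts beforehand is an all-to-all of $p$ integers bounded by $\nmax$, contributing only $\Oh{p\log\nmax}$ bits, which is dominated.) Taking the maximum of the send and receive volumes over all PEs yields the stated bottleneck communication volume $\Oh{(\Nmax+p\lmax)\log\sigma+\nmax\log\lmax}$.

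The only genuinely non-routine point --- and the part I would be most careful with --- is justifying the additive $p\lmax$. Two effects live at the $\Oh{p\lmax}$ level and both must be accounted for: the $\lmax$-granularity load imbalance of character-based partitioning, which Theorem~\ref{thm:char-splitting} already absorbs into the $(p+v)\lmax$ term, and the loss of LCP compression at the $p$ chunk boundaries, where each re-sent first string forgoes fewer than $\lmax$ characters of savings --- this is already subsumed because those characters belong to $b_j$ anyway, but it needs to be stated so that the ``compression only deletes characters'' estimate is visibly correct. One should likewise verify the bit accounting of the LCP values: each costs $\Oh{\log\lmax}$ (not $\Oh{\log\nmax}$) bits since LCPs never exceed $\lmax-1$, and no further per-string metadata is required given $0$-termination and the receiver's knowledge of the chunk sizes.
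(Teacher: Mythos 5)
Your proposal is correct and follows essentially the same route as the paper's proof: the $\nmax\log\lmax$ term comes from the LCP values, $\Nmax$ bounds the characters on the sending side, and Theorem~\ref{thm:char-splitting} with $v=\Th{p}$ gives the $\Oh{\Nmax+p\lmax}$ character bound on the receiving side. Your additional remarks on chunk-boundary behaviour and the direction of the compression inequality are a careful elaboration of what the paper leaves implicit, not a different argument.
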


\begin{proof}
  The term $\nmax \log \lmax$ stems from the LCP values.  $\Nmax$
  is an obvious upper bound for the string data on each PE.  By Theorem~\ref{thm:char-splitting},
  character-based sampling with $v = \Th{p}$ samples per PE guarantees
  $\Oh{N/p+p\lmax}\leq \Oh{\Nmax+p\lmax}$ characters on the receiving side.
\end{proof}

Note that LCP compression is of no help in establishing non-trivial
worst case bounds on the communication volume. The reason is that
local LCP values may be very short even if every string has long LCPs
with strings located on other PEs. The situation is even worse with
string based sampling since it may happen that some PE gets $n/p$ very
long strings.

\subsection{Overall Analysis of Algorithm~\dMSS}

We now analyze Algorithm~\dMSS with character-based sampling and
using algorithm \hQuick for sorting the sample.
\begin{theorem}\label{thm:dMSS}
  With the notation from Table~\ref{tab:notation},
  Algorithm \dMSS can be implemented to run using
  local work $\Oh{\Nmax+\nmax\log n +p\lmax\log n}$,
  latency $\Oh{\alpha p}$, and bottleneck communication volume
  $\Oh{(\Nmax+p \lmax \log p)\log\sigma}$ bits.
\end{theorem}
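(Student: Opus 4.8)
The plan is to charge the four steps of \dMSS separately and add up, fixing the oversampling parameter to $v = \Th{p}$ throughout --- this is the setting that makes Theorems~\ref{thm:string-splitting} and~\ref{thm:char-splitting} useful. \textbf{Step~1} (local sort): running the comparison-based base-case sorter of Section~\ref{s:seqsort} on a PE's at most $\nmax$ strings with at most $\Nmax$ characters costs $\Oh{\Nmax+\nmax\log n}$ local work (the local distinguishing prefix is bounded by $\Nmax$) and no communication. \textbf{Step~2} (splitter selection): each PE draws $v=\Th{p}$ samples of length at most $\lmax$, and we sort the $pv=\Th{p^2}$ samples with \hQuick. Instantiating Theorem~\ref{thm:hQuick} with $n'=\Th{p^2}$ strings, maximum length $\lmax$, and $\Oh{p\lmax}$ characters per PE yields local work $\Oh{p\lmax\log p}$, latency $\Oh{\alpha\log^2 p}$, and bottleneck volume $\Oh{p\lmax\log p\log\sigma}$ bits (absorbing $\lmax\log^2 p$ into $p\lmax\log p$). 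A prefix sum over the PEs then locates the $p-1$ equidistant splitters, and an all-to-all broadcast of these at-most-$\lmax$-character strings adds $\Oh{\alpha\log p}$ latency and $\Oh{p\lmax\log\sigma}$ volume.

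\textbf{Step~3} (data exchange): the Data Exchange lemma already bounds the bottleneck volume by $\Oh{(\Nmax+p\lmax)\log\sigma+\nmax\log\lmax}$ bits when character-based sampling is used; realizing the all-to-all with the low-\emph{volume} personalized primitive costs $\Oh{\alpha p}$ latency, which dominates the $\Oh{\alpha\log^2 p}$ of Step~2 and is the only super-polylogarithmic latency contribution. \textbf{Step~4} (local merge): PE $i$ LCP-merges its $p$ received runs with the LCP loser tree at cost $\Oh{m\log p+\Delta L}$, where $m$ is the number of strings received and $\Delta L$, the total LCP increment, is at most the number of distinguishing-prefix characters received, hence $\Oh{\Nmax+p\lmax}$. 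Since each string contributes at least one character, Theorem~\ref{thm:char-splitting} with $v=\Th{p}$ gives $m=\Oh{N/p+p\lmax}=\Oh{\Nmax+p\lmax}$, so this step costs $\Oh{(\Nmax+p\lmax)\log p}$, which with $\nmax\ge n/p$ and $\log p=\Oh{\log n}$ is within $\Oh{\Nmax+\nmax\log n+p\lmax\log n}$.

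Summing the four steps, using $p\le n$ and absorbing lower-order terms ($p\lmax\le p\lmax\log p$, and the $\nmax\log\lmax$ LCP volume into the character term), yields local work $\Oh{\Nmax+\nmax\log n+p\lmax\log n}$, latency $\Oh{\alpha\log^2 p+\alpha p}=\Oh{\alpha p}$, and bottleneck communication volume $\Oh{(\Nmax+p\lmax\log p)\log\sigma}$ bits, as claimed.

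The main obstacle is Step~2: one must verify that it is affordable to sort the $\Th{p^2}$-element sample with the atomic algorithm \hQuick rather than recursing, i.e., that its $\Oh{p\lmax\log p}$ work and $\Oh{p\lmax\log p\log\sigma}$ volume stay within the target (this is exactly where the extra $\log p$ factor in the communication bound comes from), and that $v=\Th{p}$ is precisely the setting that balances the load-imbalance terms of Theorems~\ref{thm:string-splitting} and~\ref{thm:char-splitting} against this sampling overhead. A secondary delicate point is the Step~4 accounting: character-based splitting controls the number of characters, not strings, a PE receives, so one has to argue that the resulting $m\log p$ merge term is still dominated by the stated local-work bound, and that $\Delta L$ is charged only against distinguishing-prefix characters and therefore contributes no extra logarithmic factor.
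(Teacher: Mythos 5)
Your proposal follows the paper's proof essentially step for step: local sort, $v=\Th{p}$ samples per PE sorted with \hQuick{} via Theorem~\ref{thm:hQuick}, splitter gossip, the all-to-all bounded through Theorem~\ref{thm:char-splitting}, and a final LCP loser-tree merge, with the same final absorption of lower-order terms. The only substantive difference is that you are somewhat more explicit than the paper about the Step~4 accounting (the paper simply states the merge takes $\Oh{N/p\log p}$ and appeals to ``some simplifications''), so the proposal is correct to the same standard as the paper's own argument.
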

Once more, we first interpret this result.  When the input is balanced
with respect to the number of strings and number of characters (i.e.,
$\nmax=\Oh{n/p}$ and $\Nmax=\Oh{N/p}$), and if it is sufficiently
large (i.e., $N=\Om{p^2\lmax}$), we get an algorithm that is as
efficient as we can expect from a method that communicates all the
data. Hence, for large inputs this is a big improvement over \hQuick.
In the worst case, we have no advantage from LCP compression even if
$D\ll N$.  However, by using character-based sampling, we achieve load
balancing guarantees.  Using parallel sorting of the sample saves
us a
factor $p$ in the minimal efficient input size
compared to \cite{fischer2019lightweight} since a
deterministic sampling approach needs samples of
\emph{quadratic} size.
\begin{proof}
  After local sorting (in time $\Oh{\Dmax+\nmax\log\nmax}$),
  each PE samples $v = \Th{p}$ strings locally which have maximal length
  $\lmax$.  These $\Oh{p^2}$ strings are then sorted using algorithm
  \hQuick. By Theorem~\ref{thm:hQuick},
  this incurs local work $\Oh{p\lmax\log n}$, latency $\Oh{\alpha\log^2p}$, and
  communication volume $\Oh{p\lmax\log\sigma\log p}$ bits.
  The
  splitter strings are then gossiped to all PEs. This contributes
  latency $\alpha\log p$ and communication volume $p\lmax\log\sigma$
  bits.

  The local data is then partitioned in time $\Oh{p\log(\nmax) \lmax}$ using binary search.
  By Theorem~\ref{thm:char-splitting}, each of the resulting $p\times p$ messages
  has size $\Oh{\Nmax/p + p\lmax}$. Moreover, no PE receives more than $\Oh{N/p + p\lmax}$ characters.
  Hence, the ensuing all-to-all data exchange
  contributes latency $\Oh{\alpha p}$ and communication volume $\Oh{(\Nmax + p\lmax)\log\sigma}$ bits.
  Finally, the received data is merged in time $\Oh{N/p\log p}$.
  Adding all these terms and making some simplifications yields the claimed result.
\end{proof}

\section{Distributed Prefix-Doubling\\ String Merge Sort}\label{s:DPMSS}

We now refine algorithm \dMSS so that it can take advantage of the
case $D\ll N$.  The idea is to find an upper bound for the
distinguishing prefix length of each input string.  We do this as a
Step~$(1+\varepsilon)$ after local sorting (Step~1) but before determining
splitters (Step~2).  The required global communication is
expensive but it pays off in theory and in Section~\ref{s:experiments}
we will see that this algorithm also works well in practice.  We not
only save in communication volume in Step~3 but knowing the
distinguishing prefix lengths also aids (character-based) splitter
determination in finding splitters that balance the actual amount of
work that needs to be done.

Algorithm \dPDSS does not solve exactly the same problem as
Algorithm~\dMSS.  Whereas \dMSS permutes the strings into sorted
order, \dPDSS only computes the permutation without completely
executing it -- it only permutes the distinguishing prefixes (and can
indicate the origin of these prefixes). Note that some applications do
not need the complete information; for example, when string sorting is
used as a subroutine in suffix sorting
\cite{futamura2001parallel,KarSan05,fischer2019lightweight}.  The locally available
information also suffices to build a sorted array of the strings for
pattern search or to build a search tree
\cite{bayer1977prefix,GraLar01}.  The resulting search data structures
support many operations (e.g., counting matches) based on local
information.

\begin{theorem}\label{thm:dPDSS}
  With the notation from Table~\ref{tab:notation},
  Algorithm \dPDSS can be implemented to run using local work
  $\Oh{\Dmax + \nmax\log n}$, latency $\Oh{\alpha p\log\dmax}$, and
  bottleneck communication volume
  $$(1+\varepsilon)\Dmax\log\sigma+\Oh{\nmax\log p+ p\dmax\log\sigma\log p}$$
  bits, in expectation and assuming that the
  all-to-all communication in Step~3 incurs bottleneck communication volume
  $h$ when the maximum sum of local message sizes is $h$.%
  \footnote{It seems to be an open problem whether there is an algorithm
    achieving this. We make this assumption in order to be able to
    concisely work out the impact of the tuning parameter $\epsilon$.}
  The latency can be reduced to $\Oh{\alpha(p + \log p\log\dmax)}$
  when increasing the
  term $\nmax\log p$ in the communication volume to $\nmax\log^2 p$.
\end{theorem}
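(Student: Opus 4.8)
The strategy is to read Algorithm~\dPDSS as Algorithm~\dMSS with two modifications and then reuse the accounting of the proof of Theorem~\ref{thm:dMSS}. The first modification is the new Step~$(1+\varepsilon)$, a prefix-doubling phase that computes, for every input string $s$, a value $\tilde d(s)$ with $\dpre{s}\le\tilde d(s)<(1+\varepsilon)\dpre{s}$. The second is that the character-based splitter selection of Step~2 and the all-to-all of Step~3 now operate on the length-$\tilde d(s)$ prefixes rather than on full strings. Given $\tilde d$, the remaining phases are analyzed as before: local sorting (Step~1) costs $\Oh{\Dmax+\nmax\log\nmax}$ time; using the LCP-loser-tree bound $m\log K+\Delta L$ with $K=p$ and $\Delta L$ no larger than the number of distinguishing-prefix characters received, the final merge (Step~4) costs $\Oh{\nmax\log p+\Dmax}$ time; and Step~3 sends, from each PE, only that PE's own distinguishing prefixes, i.e.\ at most $(1+\varepsilon)\Dmax$ characters, which under the stated assumption on the all-to-all gives payload volume $(1+\varepsilon)\Dmax\log\sigma$ bits plus $\Oh{\nmax\log p}$ bits for the accompanying origin-PE labels and LCP values.

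The heart of the proof is the doubling phase. I would run rounds $k=0,1,2,\dots$ with a geometric schedule $\ell_0<\ell_1<\cdots$ satisfying $\ell_k+1\le\ell_{k+1}\le\lceil(1+\varepsilon)\ell_k\rceil$, so that $\Oh{\log_{1+\varepsilon}\dmax}=\Oh{\log\dmax}$ rounds suffice. In round $k$ each still-active string contributes a hash of its length-$\ell_k$ prefix and the communication-efficient duplicate-detection algorithm of \cite{sanders2013communication} is invoked; a string leaves the process the first time its prefix hash is unique, at which point we set $\tilde d(s)=\ell_k$. For correctness: if two distinct strings shared their length-$\ell_k$ prefix they would produce identical hashes, so a unique hash certifies a unique prefix and hence $\dpre{s}\le\ell_k$; conversely a non-unique prefix yields a repeated hash, so no string stops too early. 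Since the string was active in round $k-1$ we also have $\ell_{k-1}<\dpre{s}$, and together with $\ell_k\le\lceil(1+\varepsilon)\ell_{k-1}\rceil<(1+\varepsilon)\dpre{s}$ this gives the two-sided bound on $\tilde d(s)$. Spurious collisions can never cause an early stop, only extra rounds of activity; choosing hash values of $\Om{\log n}$ bits makes this a negligible expected-cost contribution, which is the source of the ``in expectation'' qualifier. Each round is a single duplicate-detection instance on $\Oh{\nmax}$ local items, costing $\Oh{\alpha p}$ latency with a volume-efficient all-to-all; replacing that all-to-all by the latency-efficient one lowers the per-round latency to $\Oh{\alpha\log p}$ at the cost of a $\Oh{\log p}$ factor in the per-round volume, which is exactly why the alternative bounds read $\Oh{\alpha(p+\log p\log\dmax)}$ and $\Oh{\nmax\log^2 p}$.

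For Step~2 I would carry the lengths $\tilde d(s)$ along and balance distinguishing-prefix characters instead of all characters. Lemmas~\ref{lem:char-partsize} and~\ref{lem:char-subarray} and the proof of Theorem~\ref{thm:char-splitting} transfer essentially verbatim with $\dmax$ in the role of $\lmax$ and with character counts reinterpreted as distinguishing-prefix-character counts; with oversampling $v=\Th{p}$ this shows each PE receives $\Oh{(1+\varepsilon)D/p+p\dmax}$ distinguishing-prefix characters, which suffices to keep the merge work balanced and the additive slack at $\Oh{p\dmax}$. Sorting the $\Oh{p^2}$ sample strings, each of length at most $\dmax$, with \hQuick contributes, by Theorem~\ref{thm:hQuick}, $\Oh{p\dmax\log n}$ local work, $\Oh{\alpha\log^2 p}$ latency and $\Oh{p\dmax\log\sigma\log p}$ bits, and gossiping the splitters adds $\Oh{\alpha\log p}$ latency and $\Oh{p\dmax\log\sigma}$ bits. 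Summing all phase costs, using that terms of order $p\dmax$ are dominated once the input is large enough that $p\dmax=\Oh{\Dmax}$ (the regime in which the algorithm is of interest), and noting that the latency is dominated by the $\Oh{\log\dmax}$ doubling rounds resp.\ the Step~3 exchange, yields the three claimed bounds.

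The main obstacle is the bookkeeping that pins the leading term down to \emph{exactly} $(1+\varepsilon)\Dmax\log\sigma$ rather than a constant-factor-larger quantity: this relies on the $(1+\varepsilon)$-approximation of the doubling schedule for the send side, on charging everything else --- origin labels, LCP values, sample-sorting and splitter-gossip traffic --- to the separately stated $\Oh{\nmax\log p}$ and $\Oh{p\dmax\log\sigma\log p}$ terms, and on the footnote assumption that lets us account only for data sent. A second delicate point is verifying that the hash traffic of the $\Oh{\log\dmax}$ doubling rounds really collapses into the stated $\Oh{\nmax\log p}$ (resp.\ $\Oh{\nmax\log^2 p}$) term --- here one uses that inactive strings are dropped immediately, so the total number of (string, round) pairs handled by any one PE is at most $\sum_s\Oh{\log\dpre{s}}=\Oh{\Dmax}$ (the sum over that PE's strings, using $\log\dpre{s}\le\dpre{s}$), and in the extreme case $D\approx n\dmax$ where this is largest it is itself dominated by the leading $(1+\varepsilon)\Dmax\log\sigma$ term.
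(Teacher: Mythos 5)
Your proposal reproduces the paper's proof strategy almost exactly: Step~$(1+\varepsilon)$ is analyzed separately (the paper's Theorem~\ref{thm:distinguishing}) via iterated duplicate detection on geometrically growing prefixes, and the remaining steps reuse the \dMSS{} analysis of Theorem~\ref{thm:dMSS} with $\dmax$ playing the role of $\lmax$ and distinguishing-prefix characters replacing total characters. However, your handling of the false-positive budget has a genuine gap. You grow prefixes by a factor $1+\varepsilon$ per round, which already exhausts the entire slack in the leading term $(1+\varepsilon)\Dmax\log\sigma$ deterministically, and you then propose $\Om{\log n}$-bit hashes to render false-positive overestimates negligible. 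But the $\Oh{\nmax\log p}$ communication term for the doubling phase comes precisely from the duplicate-detection scheme of \cite{sanders2013communication}, which spends only $\Oh{\log p}$ bits per checked prefix and therefore necessarily tolerates a \emph{constant} false-positive rate; with constant-rate false positives, strings survive extra rounds and are overestimated by an expected constant factor, so $\tilde d(s)<(1+\varepsilon)\dpre{s}$ fails. You cannot simultaneously have the large hashes and the $\Oh{\nmax\log p}$ volume. The paper resolves this by growing with factor $1+\varepsilon/2$ only and reserving the remaining $\varepsilon\Dmax/2\cdot\log\sigma$ bits to absorb the expected overshoot at a false-positive rate of roughly $1/2$.

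A second, smaller problem is your closing claim that the doubling-phase hash traffic collapses into $\Oh{\nmax\log p}$: you bound the number of (string,\,round) pairs by $\sum_s\log\dpre{s}\le\Dmax$ and then assert that $\Oh{\Dmax\log p}$ bits are dominated by $(1+\varepsilon)\Dmax\log\sigma$. That domination fails whenever $\log p\gg\log\sigma$ (e.g.\ the DNA input with $\sigma=4$ on over a thousand PEs). The paper instead charges each string $\Oh{\log p}+o(\dpre{s})$ bits over \emph{all} of its active rounds --- $\Oh{\log p}$ for the first round, with the cost of later rounds absorbed into the $\varepsilon\Dmax$ character slack --- which is what actually yields the stated $\Oh{\nmax\log p}$ term. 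The rest of your argument (local sorting, sample sorting with \hQuick{} on length-$\dmax$ samples, the splitting lemmas with reinterpreted character counts, the LCP-loser-tree merge, and the latency/volume trade-off via the hypercube all-to-all) matches the paper.
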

Again, we interpret the result before
proving it.  Compared to Algorithm~\dMSS, we now achieve local work
and bottleneck communication volume that is close to a worst case lower
bounds if the input is sufficiently large and
sufficiently evenly distributed over the PEs. The price we pay is a
logarithmic factor in the latency which correspondingly increases the
input size that is required to achieve overall efficiency.
\begin{proof}
We only discuss the differences to Algorithm~\dMSS and refer to
Theorem~\ref{thm:dMSS} for remaining details.  The work for Step~1 is
$\Oh{\Dmax+\nmax\log\nmax}$ using any efficient sequential comparison-based
string sorting algorithm.

The analysis of Step~2 is similar to that in Theorem~\ref{thm:dMSS}
except that we are now using samples and splitter strings of length
at most $\dmax$. Also, we do not use the total string lengths as the
basis for sampling but the length of the approximated distinguishing
prefix lengths.
Using Algorithm~\hQuick on the sample now incurs
local work $\Oh{p\dmax\log p}$, latency $\Oh{\alpha\log^2p}$, and
communication volume $\Oh{p\dmax\log\sigma\log p}$ bits.

Refer to Theorem~\ref{thm:distinguishing} for the analysis of Step~$1+\varepsilon$.

The all-to-all exchange in Step~3 incurs latency $\alpha\log p$ and
communication volume $(1+\varepsilon/2)\Dmax\log\sigma$ for those strings
whose prefix length has been successfully approximated within a factor
$1+\varepsilon/2$. We add an additional volume
$\varepsilon\Dmax/2\cdot\log\sigma$ to account for the $o(1)$ term in the
analysis of Step~$1+\varepsilon$ and for the prefix lengths that are
overestimated due to false positives in the duplicate detection. This
works out by setting an appropriate false positive rate $\approx 1/2$.
Overall, we calculate bottleneck communication volume
$(1+\varepsilon)\Dmax$ for Step~3.

In Step~4, the received data is merged in time $\Oh{D/p\log p}$.  Adding
all these terms and making some simplifications yields the claimed
result.
\end{proof}

\begin{figure}[t!]
  \centering%
  \begin{tikzpicture}[
    scale=0.36, yscale=1.153,
    letter/.style={
      font=\ttfamily, anchor=base,
      inner sep=0, minimum width=3.6mm, text height=2.9mm, text depth=0.7mm},
    lcpv/.style={
      anchor=base, inner sep=0,
      font={\small},
    },
    ]

    \def\drawstring#1#2#3#4{
      \foreach \c [count=\x from 0] in {#4} {
        \ifthenelse{\x<#2}{
          \node[letter,#3,font=\ttfamily\bfseries] (L#1-\x) at (\x,-#1) {\c};
        }{
          \node[letter,c0] (L#1-\x) at (\x,-#1) {\c};
        }
        \xdef\xmax{\x}
      }
      \draw (-0.5,-#1-0.16) rectangle (\xmax+0.5,-#1+0.66);
    }

    \colorlet{c0}{black}
    \colorlet{c1}{blue!80!black}
    \colorlet{c2}{red!80!black}

    \begin{scope}
      \node at (10,+1.55) [anchor=base] {Step 1: sort locally (with LCP array output)};

      \begin{scope}
        \drawstring{0}{0}{c1}{a,l,g,a,e};
        \drawstring{1}{0}{c1}{a,l,p,h,a};
        \drawstring{2}{0}{c1}{a,l,p,s};
        \drawstring{3}{0}{c1}{o,r,d,e,r};
        \draw (-1.5,1) rectangle (6,-3.6);
      \end{scope}
      \begin{scope}[xshift=8cm]
        \drawstring{0}{0}{c1}{a,l,g,o};
        \drawstring{1}{0}{c1}{s,n,o,w};
        \drawstring{2}{0}{c1}{s,o,r,b,e,t};
        \drawstring{3}{0}{c1}{s,o,r,t,e,r};
        \draw (-1.5,1) rectangle (6,-3.6);
      \end{scope}
      \begin{scope}[xshift=16cm]
        \drawstring{0}{0}{c1}{o,r,a,n,g,e};
        \drawstring{1}{0}{c1}{o,r,g,a,n};
        \drawstring{2}{0}{c1}{s,o,r,t,e,d};
        \drawstring{3}{0}{c1}{s,o,u,l};
        \draw (-1.5,1) rectangle (6,-3.6);
      \end{scope}

      \node at (10,-4.55) [anchor=base] {Step $1+\varepsilon$ (depth 1): approximate distinguishing prefix};
      \coordinate (A) at (0, -4.6);
    \end{scope}

    \begin{scope}[shift=(A),yshift=-14mm]
      \begin{scope}
        \drawstring{0}{1}{c1}{a,l,g,a,e};
        \drawstring{1}{1}{c1}{a,l,p,h,a};
        \drawstring{2}{1}{c1}{a,l,p,s};
        \drawstring{3}{1}{c1}{o,r,d,e,r};
        \draw (-1.5,1) rectangle (6,-3.6);
      \end{scope}
      \begin{scope}[xshift=8cm]
        \drawstring{0}{1}{c1}{a,l,g,o};
        \drawstring{1}{1}{c1}{s,n,o,w};
        \drawstring{2}{1}{c1}{s,o,r,b,e,t};
        \drawstring{3}{1}{c1}{s,o,r,t,e,r};
        \draw (-1.5,1) rectangle (6,-3.6);
      \end{scope}
      \begin{scope}[xshift=16cm]
        \drawstring{0}{1}{c1}{o,r,a,n,g,e};
        \drawstring{1}{1}{c1}{o,r,g,a,n};
        \drawstring{2}{1}{c1}{s,o,r,t,e,d};
        \drawstring{3}{1}{c1}{s,o,u,l};
        \draw (-1.5,1) rectangle (6,-3.6);
      \end{scope}

      \node at (10,-4.55) [anchor=base] {Step $1+\varepsilon$ (depth 2): using distributed duplicate detection};
      \coordinate (A) at (0, -4.6);
    \end{scope}

    \begin{scope}[shift=(A),yshift=-14mm]
      \begin{scope}
        \drawstring{0}{2}{c1}{a,l,g,a,e};
        \drawstring{1}{2}{c1}{a,l,p,h,a};
        \drawstring{2}{2}{c1}{a,l,p,s};
        \drawstring{3}{2}{c1}{o,r,d,e,r};
        \draw (-1.5,1) rectangle (6,-3.6);
      \end{scope}
      \begin{scope}[xshift=8cm]
        \drawstring{0}{2}{c1}{a,l,g,o};
        \drawstring{1}{2}{c2}{s,n,o,w};
        \drawstring{2}{2}{c1}{s,o,r,b,e,t};
        \drawstring{3}{2}{c1}{s,o,r,t,e,r};
        \draw (-1.5,1) rectangle (6,-3.6);
      \end{scope}
      \begin{scope}[xshift=16cm]
        \drawstring{0}{2}{c1}{o,r,a,n,g,e};
        \drawstring{1}{2}{c1}{o,r,g,a,n};
        \drawstring{2}{2}{c1}{s,o,r,t,e,d};
        \drawstring{3}{2}{c1}{s,o,u,l};
        \draw (-1.5,1) rectangle (6,-3.6);
      \end{scope}

      \node at (10,-4.55) [anchor=base] {Step $1+\varepsilon$ (depth 4): repeat until all prefixes};
      \coordinate (A) at (0, -4.6);
    \end{scope}

    \begin{scope}[shift=(A),yshift=-14mm]
      \begin{scope}
        \drawstring{0}{4}{c2}{a,l,g,a,e};
        \drawstring{1}{4}{c2}{a,l,p,h,a};
        \drawstring{2}{4}{c2}{a,l,p,s};
        \drawstring{3}{4}{c2}{o,r,d,e,r};
        \draw (-1.5,1) rectangle (6,-3.6);
      \end{scope}
      \begin{scope}[xshift=8cm]
        \drawstring{0}{4}{c2}{a,l,g,o};
        \drawstring{1}{2}{c2}{s,n,o,w};
        \drawstring{2}{4}{c2}{s,o,r,b,e,t};
        \drawstring{3}{4}{c1}{s,o,r,t,e,r};
        \draw (-1.5,1) rectangle (6,-3.6);
      \end{scope}
      \begin{scope}[xshift=16cm]
        \drawstring{0}{4}{c2}{o,r,a,n,g,e};
        \drawstring{1}{4}{c2}{o,r,g,a,n};
        \drawstring{2}{4}{c1}{s,o,r,t,e,d};
        \drawstring{3}{4}{c2}{s,o,u,l};
        \draw (-1.5,1) rectangle (6,-3.6);
      \end{scope}

      \node at (10,-4.55) [anchor=base] {Step $1+\varepsilon$ (depth 8): are unique.};
      \coordinate (A) at (0, -4.6);
    \end{scope}

    \begin{scope}[shift=(A),yshift=-14mm]
      \begin{scope}
        \drawstring{0}{4}{c2}{a,l,g,a,e};
        \drawstring{1}{4}{c2}{a,l,p,h,a};
        \drawstring{2}{4}{c2}{a,l,p,s};
        \drawstring{3}{4}{c2}{o,r,d,e,r};
        \draw (-1.5,1) rectangle (6,-3.6);
      \end{scope}
      \begin{scope}[xshift=8cm]
        \drawstring{0}{4}{c2}{a,l,g,o};
        \drawstring{1}{2}{c2}{s,n,o,w};
        \drawstring{2}{4}{c2}{s,o,r,b,e,t};
        \drawstring{3}{8}{c2}{s,o,r,t,e,r};
        \draw (-1.5,1) rectangle (6,-3.6);
      \end{scope}
      \begin{scope}[xshift=16cm]
        \drawstring{0}{4}{c2}{o,r,a,n,g,e};
        \drawstring{1}{4}{c2}{o,r,g,a,n};
        \drawstring{2}{8}{c2}{s,o,r,t,e,d};
        \drawstring{3}{4}{c2}{s,o,u,l};
        \draw (-1.5,1) rectangle (6,-3.6);
      \end{scope}

      \node at (10,-4.55) [anchor=base] {Step 2: sample regularly: \{ \texttt{alph}, \texttt{sn}, \texttt{orga} \}, };
      \node at (10,-5.55) [anchor=base] {select splitters: \{ \texttt{alph}, \texttt{orga} \}, and find splits.};
      \coordinate (A) at (0, -4.6);
    \end{scope}

    \begin{scope}[shift=(A),yshift=-24mm]
      \tikzset{lcpv/.append style={white}}
      \begin{scope}
        \drawstring{0}{4}{c2}{a,l,g,a,e};
        \drawstring{1}{4}{c2}{a,l,p,h,a};
        \drawstring{2}{4}{c2}{a,l,p,s};
        \drawstring{3}{4}{c2}{o,r,d,e,r};
        \draw (-1.5,1) rectangle (6,-3.6);
        \draw [semithick,green!50!black] (-0.7,-1.25) -- +(6.4,0);
        \draw [semithick,green!50!black] (-0.7,-3.25) -- +(6.4,0);
      \end{scope}
      \begin{scope}[xshift=8cm]
        \drawstring{0}{4}{c2}{a,l,g,o};
        \drawstring{1}{2}{c2}{s,n,o,w};
        \drawstring{2}{4}{c2}{s,o,r,b,e,t};
        \drawstring{3}{8}{c2}{s,o,r,t,e,r};
        \draw (-1.5,1) rectangle (6,-3.6);
        \draw [semithick,green!50!black] (-0.7,-0.2) -- +(6.4,0);
        \draw [semithick,green!50!black] (-0.7,-0.3) -- +(6.4,0);
      \end{scope}
      \begin{scope}[xshift=16cm]
        \drawstring{0}{4}{c2}{o,r,a,n,g,e};
        \drawstring{1}{4}{c2}{o,r,g,a,n};
        \drawstring{2}{8}{c2}{s,o,r,t,e,d};
        \drawstring{3}{4}{c2}{s,o,u,l};
        \draw (-1.5,1) rectangle (6,-3.6);
        \draw [semithick,green!50!black] (-0.7,0.75) -- +(6.4,0);
        \draw [semithick,green!50!black] (-0.7,-1.25) -- +(6.4,0);
      \end{scope}

      \node at (10,-4.55) [anchor=base] {Step 3: all-to-all exchange with LCP compression};
      \coordinate (A) at (0, -4.6);
    \end{scope}

    \def\drawstring#1#2#3{
      \foreach \c [count=\x from 0] in {#3} {
        \ifthenelse{\x<#2}{
          \node[letter] (L#1-\x) at (\x,-#1) {\c};
        }{
          \node[letter,black!25] (L#1-\x) at (\x,-#1) {\c};
        }
        \xdef\xmax{\x}
      }
      \draw (-0.5,-#1-0.16) rectangle (\xmax+0.5,-#1+0.66);
    }

    \colorlet{c0}{green}
    \colorlet{c2}{red!80!black}
    \colorlet{c3}{black!20}

    \begin{scope}[shift=(A),yshift=-14mm]
      \begin{scope}
        \drawstring{0}{4}{a,l,g,a,e};
        \drawstring{1}{4}{-,-,p,h,a};
        \drawstring{2}{4}{a,l,g,o};
        \draw (-1.5,1) rectangle (6,-4.6);
        \draw [semithick,green!50!black] (-0.7,-1.25) -- +(6.4,0);
        \draw [semithick,green!50!black] (-0.7,-2.25) -- +(6.4,0);
      \end{scope}
      \begin{scope}[xshift=8cm]
        \drawstring{0}{4}{a,l,p,s};
        \drawstring{1}{4}{o,r,d,e,r};
        \drawstring{2}{4}{o,r,a,n,g,e};
        \drawstring{3}{4}{-,-,g,a,n};
        \draw (-1.5,1) rectangle (6,-4.6);
        \draw [semithick,green!50!black] (-0.7,-1.2) -- +(6.4,0);
        \draw [semithick,green!50!black] (-0.7,-1.3) -- +(6.4,0);
      \end{scope}
      \begin{scope}[xshift=16cm]
        \drawstring{0}{2}{s,n,o,w};
        \drawstring{1}{4}{-,o,r,b,e,t};
        \drawstring{2}{6}{-,-,-,t,e,r};
        \drawstring{3}{6}{s,o,r,t,e,d};
        \drawstring{4}{4}{-,-,u,l};
        \draw (-1.5,1) rectangle (6,-4.6);
        \draw [semithick,green!50!black] (-0.7,0.75) -- +(6.4,0);
        \draw [semithick,green!50!black] (-0.7,-2.25) -- +(6.4,0);
      \end{scope}

      \node at (10,-5.55) [anchor=base] {Step 4: multiway merge locally with LCP loser tree};
      \coordinate (A) at (0, -5.6);
    \end{scope}

    \begin{scope}[shift=(A),yshift=-14mm]
      \tikzset{lcpv/.append style={white}}
      \begin{scope}
        \drawstring{0}{4}{a,l,g,a,e};
        \drawstring{1}{4}{a,l,g,o};
        \drawstring{2}{4}{a,l,p,h,a};
        \draw (-1.5,1) rectangle (6,-4.6);
      \end{scope}
      \begin{scope}[xshift=8cm]
        \drawstring{0}{4}{a,l,p,s};
        \drawstring{1}{4}{o,r,a,n,g,e};
        \drawstring{2}{4}{o,r,d,e,r};
        \drawstring{3}{4}{o,r,g,a,n};
        \draw (-1.5,1) rectangle (6,-4.6);
      \end{scope}
      \begin{scope}[xshift=16cm]
        \drawstring{0}{2}{s,n,o,w};
        \drawstring{1}{4}{s,o,r,b,e,t};
        \drawstring{2}{6}{s,o,r,t,e,d};
        \drawstring{3}{6}{s,o,r,t,e,r};
        \drawstring{4}{4}{s,o,u,l};
        \draw (-1.5,1) rectangle (6,-4.6);
      \end{scope}
    \end{scope}

  \end{tikzpicture}%
  \caption{\label{fig:PDMS}Steps of Algorithm \dPDSS shown on example strings.
    String prefixes marked blue are duplicates, while red prefixes are unique.
    In Step 3, only the approximate distinguishing prefix is transmitted, the
    omitted characters are marked gray.%
  }%
\end{figure}

\subsection{Approximating Distinguishing Prefix Lengths}\label{ss:distinguishing}

Determining whether a prefix of an input string is a distinguishing
prefix is equivalent to finding out whether there are any duplicates
of it. Duplicate detection is a well studied problem. There is no
known deterministic solution to the problem apart from
communicating the entire prefix. However, we can use randomization.
We calculate hash values (a.k.a. fingerprint) of the prefixes to be
considered and determine which fingerprints are unique. The
corresponding prefixes are now certain to be distinguishing prefixes.
Errors are on the safe side -- two fingerprints may be accidentally
identical which would lead to falsely declaring their corresponding
prefixes to be non-distinguishing. By judiciously choosing the
fingerprint size, by compressing fingerprints, and by iterating the
process with a short fingerprint in the first iteration and a long
fingerprint in the second iteration (where only few candidates
remain), we can do duplicate detection using only $\Oh{\log p}$ bits
of communication volume for each prefix to be checked
\cite{sanders2013communication}.

To approximate the distinguishing prefix length of a string $s$, we
start from some initial guess $\ell_s$ and then let the guessed length
grow geometrically by a factor $(1+\varepsilon)$ in each iteration. With
our default value of $\varepsilon=1$ we get \emph{prefix doubling} which
we use to name sorting algorithm \dPDSS.
Fig.~\ref{fig:PDMS} shows an illustration of \dPDSS and
we fill in the remaining details in the proof of the following theorem.

\begin{theorem}\label{thm:distinguishing}
With the notation from Table~\ref{tab:notation},
distinguishing prefix lengths can be found using local work
$\Oh{\Dmax}$, latency $\Oh{\alpha p\log\dmax}$, and
  bottleneck communication volume
  $\Oh{\nmax\log p}$ bits, in expectation.
  The latency term can be reduced to $\Oh{\alpha\log p\log\dmax}$
  at the price of increasing the
  term $\nmax\log p$ in the communication volume to $\nmax\log^2 p$.
\end{theorem}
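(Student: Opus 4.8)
The plan is to realize Step~$1+\varepsilon$ as an iterative fingerprinting loop. Each string $s$ keeps a current guess $\ell_s$ for its distinguishing prefix length, which I would initialize to a small constant, or, to save a few iterations, to the length of the locally observed longest common prefix of $s$ with its two neighbors in the array already sorted in Step~1 (this is a lower bound for $\dpre{s}$, so it is safe). In iteration $j$, every string whose prefix has not yet been certified contributes a Karp--Rabin-style polynomial fingerprint of $s[0..\ell_s)$; the fingerprint of the extended prefix is obtained incrementally from the previous one, so that the characters ever touched on behalf of $s$ form a single prefix of length $\Oh{(1+\varepsilon)\dpre{s}}$. On the resulting multiset of fingerprints we run the communication-efficient duplicate-detection primitive of \cite{sanders2013communication}: every fingerprint that turns out to be globally unique certifies its prefix as distinguishing and retires the string, while every remaining active string grows its guess by the factor $1+\varepsilon$ (capped at the string length) and proceeds. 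The loop stops when no active string remains.

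Correctness is one-sided. A prefix with a genuine duplicate has a genuinely non-unique fingerprint, so no length below $\dpre{s}$ is ever certified; the only errors are hash collisions, which make us keep growing and hence can only over-estimate. Since only lengths that are powers of $1+\varepsilon$ are tested, the certified length is below $(1+\varepsilon)\dpre{s}$ plus a rare extra contribution from collisions whose expectation is absorbed by the fingerprint-size choice (and by the short-then-long fingerprint trick of \cite{sanders2013communication}); this is exactly the $o(1)$/false-positive slack invoked in the proof of Theorem~\ref{thm:dPDSS}. For local work, the incremental hashing touches $\Oh{(1+\varepsilon)\Dmax}=\Oh{\Dmax}$ characters on any PE, and the local part of duplicate detection is (near-)linear in the fingerprints it processes, a total I would bound by $\sum_{s\text{ on a PE}}\Oh{\log\dpre{s}}\le\Oh{\Dmax}$. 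Since after $\Oh{\log_{1+\varepsilon}\dmax}$ doublings every guess has reached $\dmax\ge\dpre{s}$, the number of iterations is $\Oh{\log\dmax}$.

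For latency, each iteration performs one call to the duplicate-detection primitive, which is in essence a personalized all-to-all. Using its low-volume variant ($\Oh{\alpha p}$ per call) gives latency $\Oh{\alpha p\log\dmax}$; replacing it by the low-latency variant ($\Oh{\alpha\log p}$ per call) gives $\Oh{\alpha\log p\log\dmax}$ at the price of a $\log p$ factor on the volume of the fingerprints, i.e.\ $\nmax\log^2 p$ instead of $\nmax\log p$, matching the two alternatives in the statement. For the communication volume, \cite{sanders2013communication} provides $\Oh{\log p}$ bits per prefix that has to be checked, so the target $\Oh{\nmax\log p}$ reduces to showing that the number of prefix checks charged to the bottleneck PE is $\Oh{\nmax}$.

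I expect this last point to be the main obstacle. A naive count charges one check per active string per iteration, giving $\Theta(\log\dpre{s})$ checks per string and hence up to $\Theta(\nmax\log\dmax)$ at the bottleneck -- a $\log\dmax$ factor too much. Reaching $\Oh{\nmax}$ must rely on amortization: only strings that collided in the previous iteration are re-checked, a re-check needs only to separate a string from the others in its common-prefix group (so short fingerprints suffice until few candidates remain, exactly as in \cite{sanders2013communication}), and the fingerprint precision can itself be grown geometrically so that the cost charged to a single string telescopes over its $\Oh{\log\dpre{s}}$ iterations. Turning this into a rigorous $\Oh{\log p}$-bits-per-string bound -- rather than per (string, iteration) -- is the crux; once it is in place, the latency and local-work bounds follow by summing the per-iteration costs.
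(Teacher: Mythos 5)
Your algorithm is exactly the paper's: geometrically growing prefix guesses, Karp--Rabin-style fingerprints fed to the duplicate-detection primitive of \cite{sanders2013communication}, one-sided errors, $\Oh{\log\dmax}$ rounds, and the same latency/volume trade-off via direct versus hypercube-indirect delivery of the fingerprints. The local-work and latency bounds you give are the paper's. The one place you stop short --- getting from ``$\Oh{\log p}$ bits per (string, iteration)'' down to the stated $\Oh{\nmax\log p}$ --- is indeed the only nontrivial accounting step, but the paper does \emph{not} close it by telescoping the fingerprint precision across iterations. It uses two other ingredients. First, the initial guess is not a constant but $\ell=\Th{\ceil{\log p/\log\sigma}}$ \emph{characters}, i.e.\ $\Th{\log p}$ bits of prefix; consequently a string $s$ participates in only $\Oh{\log\bigl(\dpre{s}\log\sigma/\log p\bigr)}$ rounds rather than $\Oh{\log\dpre{s}}$. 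Second, the resulting total of $\Oh{\log p\cdot\log(\dpre{s}\log\sigma/\log p)}$ bits for string $s$ is accounted as $\Oh{\log p}+o(\dpre{s}\log\sigma)$: writing $x=\dpre{s}\log\sigma/\log p$, the cost is $\log p\,(1+\log x)$, and $(1+\log x)/x\to 0$, so whatever exceeds $\Oh{\log p}$ is a vanishing fraction of the string's own distinguishing prefix in bits. Summed over a PE this gives $\Oh{\nmax\log p}+o(\Dmax\log\sigma)$; the second term is not squeezed into the $\nmax\log p$ term at all but is deliberately charged to the $(1+\varepsilon)\Dmax\log\sigma$ data-exchange term in Theorem~\ref{thm:dPDSS} (that is what the ``additional volume $\varepsilon\Dmax/2\cdot\log\sigma$ to account for the $o(1)$ term'' in that proof refers to). So the strict per-string $\Oh{\log p}$ bound you were trying to prove is not claimed and not needed.

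Two smaller remarks. Your alternative initialization of $\ell_s$ to the local LCP with the neighbors is safe but does not substitute for the $\Th{\log p/\log\sigma}$ start: with a constant (or adversarially small local-LCP) initial guess, a string with $\dpre{s}\approx\log p$ already incurs $\Th{\log p\log\log p}$ bits, which is neither $\Oh{\log p}$ nor $o(\dpre{s}\log\sigma)$ for constant $\sigma$. And your bound on the local work of the duplicate-detection bookkeeping, $\sum_s\Oh{\log\dpre{s}}\le\Oh{\Dmax}$, silently assumes $\log\dpre{s}=\Oh{\dpre{s}}$ summed per string, which is fine, but the dominant local cost is the $\Oh{(1+\varepsilon)\Dmax}$ characters hashed, as you say.
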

\begin{proof}
Determining approximate distinguishing prefix sizes
starts with an initial guess
$\ell=\Th{\ceil{\log p/\log\sigma}}$ bits and iteratively multiplies $\ell$ by a factor
$1+\varepsilon/2$ taking all strings into account whose first $\ell$
characters are not proven to be unique yet.  Hence, the overall number
of iterations is $\log_{1+\varepsilon/2}\dmax=\Oh{\log\dmax}$.
Each
iteration incurs a latency of $\alpha p$ and communication volume
$\Oh{\log p}$ for each string that has not been eliminated yet.
Summing the communication volume for a particular string $s$ with
distinguishing prefix length $\dpre{s}$ yields communication volume
$\Oh{\log p}+o(\dpre{s})$.  Overall, we account communication volume
$\Oh{\nmax\log p}$.

The above discussion assumes that the all-to-all communication of
fingerprints is done by directly delivering them to their destination.
We can reduce the latency of this all-to-all to $\alpha\log p$ by
delivering the data indirectly, e.g., using a hypercube based
all-to-all \cite{SMDD19}. This increases the communication volume by a
factor $\log p$ however.
\end{proof}

Theorem~\ref{thm:distinguishing} may also be useful outside string
sorting algorithms in order to analyze the input with respect to its
distinguishing prefixes.
A simple application might be to choose an algorithm for suffix sorting
based on approximations of $D$  -- when
$D/n$ is small, we can use string
sorting based algorithms, otherwise, more sophisticated algorithms are
better. We might also use this information to choose the difference cover size in an
implementation of the DC algorithm \cite{KarSan05}.

\subsection{Average Case Analysis of Algorithm~\dPDSS and Beyond}\label{ss:average}

Neither Algorithm~\dMSS, nor Algorithm~\dPDSS can profit from LCP
compression in the worst case.  This is because there may be inputs
where all input strings have only very short local LCP values but very
long distinguishing prefix lengths due to similar strings on other
PEs. In order to understand why LCP-compression is nevertheless useful
in practice, we now outline an average case analysis.  To keep things
simple let us first consider random bit strings where 0s or 1s are
chosen independently with probability $1/2$.  Among $n$ strings
uniformly distributed over $p$ PEs, the distinguishing prefix lengths
will be about $\log n$. Locally, the LCP values will be about
$\log(n/p)$. Hence, LCP compression saves us $\log(n/p)$ bits per
string. Thus, only about $\log n-\log(n/p)=\log p$ bits actually need to be
transferred.

Therefore, for random inputs, the communication volume of
Algorithm~\dPDSS is dominated by the $\Oh{\log n}$ bits communicated
for LCP-values, string IDs, etc. We now outline how to
obtain an algorithm beyond \dPDSS that reduces this cost by data
compression.  Local LCP-values ``on-the-average'' only differ by a
constant requiring $\Oh{1}$ bits to communicate them using
a combination of difference encoding and variable-bit-length codes. We
also cannot afford to transfer string IDs ($\log n$ bits) or long
associated information. However, we can still view this as a sorting
algorithm with a similar API as Algorithm~\dPDSS: To reconstruct an
output string $s$ and its associated information, a PE remembers from
which PE $i$ string $s$ was received and at which position $j$ in the array of
strings received from PE $i$ it was located. PE $i$ can then be queried for
the suffix and associated information of $s$. This complication also
explains why the logarithm of the number of permutations of the inputs
($\log n!\approx n\log n$, i.e., about $\log n$ bits per input
string), is \emph{not} a lower bound for our view on the sorting
problem -- we do not compute a full permutation but only a data
structure that allows querying this permutation at a cost of
$\Oh{\log n}$ bits of communication per query.

Let us now turn to more general input models.
Assume now that  the characters come from
a random source with entropy $H$.  Distinguishing prefix sizes are
now about $\log_{1/H}n$ and LCP values are $\log_{1/H}n/p$ so that
only $\log_{1/H}p$ characters need to be transmitted. By additionally
compressing those, we can get down to about $\log p$ bits once more.
This argument not only works for random sources where characters are
chosen independently but also, e.g., for Markov chains.

\section{Experiments}\label{s:experiments}
\subsection{Inputs}
We now present experiments based on two large real world data sets (\CommonCrawl and \DnaReads)
and a synthetic data set (\dton) with tunable ratio $r=D/N$; see \cite{Schimek19} for details.
In Section~\ref{ss:moreInputs}, we summarize results for further inputs.
The $i$-th string from the \dton input consists of an appropriate number of repetitions of the first character of $\Sigma$ followed by a base $\sigma$ encoding of $i$ followed by further characters to achieve the desired string length (500 in the numbers reported here). Value $r=0$ means that $i$ begins immediately and $r=1$ means that $i$ stands at the end of the string.


Input \CommonCrawl consists of the concatenation of the first 200 files from CommonCrawl (2016-40)%
\footnote{\scriptsize\url{commoncrawl.s3.amazonaws.com/crawl-data/CC-MAIN-2016-40/wet.paths.gz}}
This data consists of 82\,GB of text dumps of websites.
Each line of these files represents one input string.
Here we have $D/N = 0.68$, alphabet size 242, average line length 40 characters, and average LCP 23.9 (60\,\% of each line).




As an example for small alphabets and bioinformatics applications, we
consider input \DnaReads which consists of reads of DNA sequences from the 1000 Genomes Project\footnote{\scriptsize\url{www.internationalgenome.org/data-portal/sample}}.
Sorting such inputs is relevant as preprocessing for genome assembly or for building indices on the raw data.
We concatenated the low coverage whole genome sequence (WGS) reads from the lexicographically smallest six samples (HG00099, HG00102, HG00107, HG00114, HG00119, HG00121).
We extracted the reads from the FastQ files discarding quality information and concatenated them in lexicographic order of their accession identifier.
Reads containing any other character than \texttt{A}, \texttt{C}, \texttt{G}, and \texttt{T} were dropped.
The resulting data set contains 125\,GB base pairs in 1.27 million read strings with an alphabet size of four and $D/N = 38\,\%$.
On average a DNA read line is 98.7 base pairs long with an average LCP of 29.2 (30\,\% of each line).
Compared to the \CommonCrawl input, \DnaReads has a considerably lower percentage of characters in the LCPs and distinguishing prefix.
This is due to the DNA base pair sequences being more random than text on web pages.

The \CommonCrawl and \DnaReads data was split such that each PE gets about the same number of characters.
The strings from \dton are randomly distributed over the PEs.

\subsection{Hardware}
All experiments were performed on the distributed-memory cluster ForHLR I.
This cluster consists of 512 compute nodes.
Each of these nodes contains two 10-core Intel Xeon
processors E5-2670 v2 (Sandy Bridge) with a clock speed of 2.5\,GHz and
have 10$\times$256\,KB of level 2 cache and 25\,MB level 3 cache.  Each node
possesses 64\,GB of main memory and an adapter to connect to the
InfiniBand 4X FDR interconnect.%
\footnote{\scriptsize\url{wiki.scc.kit.edu/hpc/index.php/ForHLR_-_Hardware_and_Architecture}}
Intel MPI Library 2018 was used as implementation of the MPI standard.
All programs were compiled with GCC 8.2.0 and optimization flags
\texttt{-O3} and \texttt{-march=native}. We create one MPI process on each available core, i.e., hardware threads are not used.

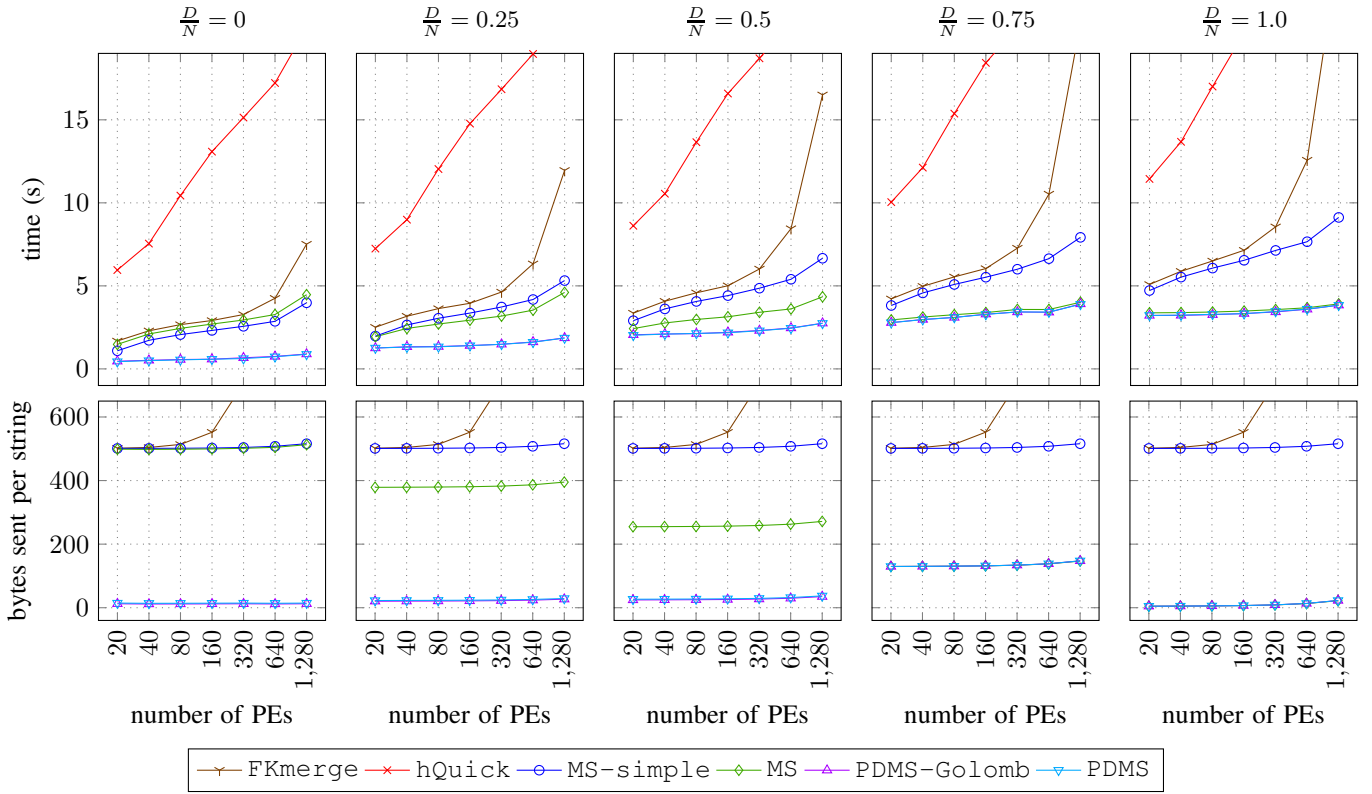
\begin{figure*}
  \pgfplotscreateplotcyclelist{my_color}{%
    plotcolor9, every mark/.append style={solid,scale=1.1}, mark=Mercedes star flipped \\%
    plotcolor0, every mark/.append style={solid,scale=1.0}, mark=x \\%
    plotcolor1, every mark/.append style={solid,scale=0.9}, mark=o \\%
    plotcolor2, every mark/.append style={solid,scale=1.0}, mark=diamond \\%
    plotcolor3, every mark/.append style={solid,scale=1.0}, mark=triangle \\%
    plotcolor4, every mark/.append style={solid,scale=1.0,rotate=180}, mark=triangle \\%
  }
  \pgfplotsset{
    myPlot1/.style={
      myPlot,
      width=46mm,height=60mm,
      ymin=-1.00, ymax=19,
      cycle list name={my_color},
      log x ticks with fixed point,
      xtick={20,40,80,160,320,640,1280,2560},
      xticklabel style={rotate=90, anchor=east},
      legend columns=6,
    },
    myPlot2/.style={
      myPlot1,
      xticklabels={,,},
    },
  }

  \centering

  \hspace{8ex}
  \begin{tikzpicture}[trim axis left]
    \begin{semilogxaxis}[myPlot2,
      title={$\frac{D}{N} = 0$},
      ylabel={time (s)},
      legend to name={leg:d2n_weak},
      ]

      \addplot coordinates { (20,1.68186) (40,2.29341) (80,2.66636) (160,2.91365) (320,3.27234) (640,4.25227) (1280,7.51084) };
      \addlegendentry{\kurpicz};
      \addplot coordinates { (20,5.95345) (40,7.54681) (80,10.4359) (160,13.0844) (320,15.137) (640,17.2183) (1280,20.5372) };
      \addlegendentry{\hQuick};
      \addplot coordinates { (20,1.09004) (40,1.72091) (80,2.0655) (160,2.31644) (320,2.56471) (640,2.86422) (1280,3.98063) };
      \addlegendentry{\MSSimpleS};
      \addplot coordinates { (20,1.47596) (40,2.10313) (80,2.44087) (160,2.69991) (320,2.94475) (640,3.27302) (1280,4.47023) };
      \addlegendentry{\MSLCPS};
      \addplot coordinates { (20,0.462157) (40,0.516935) (80,0.563829) (160,0.584205) (320,0.66917) (640,0.750515) (1280,0.895206) };
      \addlegendentry{\PDGolombS};
      \addplot coordinates { (20,0.435868) (40,0.497253) (80,0.543658) (160,0.581725) (320,0.622036) (640,0.732387) (1280,0.876076) };
      \addlegendentry{\PDNoGolombS};

    \end{semilogxaxis}
  \end{tikzpicture}
  \hfill%
  \begin{tikzpicture}
    \begin{semilogxaxis}[myPlot2,
      title={$\frac{D}{N} = 0.25$}, yticklabel=\empty]

      \addplot coordinates { (20,2.50764) (40,3.173) (80,3.63177) (160,3.95237) (320,4.62924) (640,6.32796) (1280,11.9554) };
      \addlegendentry{name=\\kurpicz};
      \addplot coordinates { (20,7.24456) (40,8.98775) (80,12.0366) (160,14.7728) (320,16.8422) (640,18.9669) (1280,22.2726) };
      \addlegendentry{name=\\hQuick};
      \addplot coordinates { (20,1.97247) (40,2.64647) (80,3.04982) (160,3.3626) (320,3.72754) (640,4.17839) (1280,5.31962) };
      \addlegendentry{name=\\MSSimpleS};
      \addplot coordinates { (20,1.93808) (40,2.43646) (80,2.70236) (160,2.93428) (320,3.18128) (640,3.53739) (1280,4.60605) };
      \addlegendentry{name=\\MSLCPS};
      \addplot coordinates { (20,1.25717) (40,1.33083) (80,1.34657) (160,1.40428) (320,1.48392) (640,1.62181) (1280,1.86595) };
      \addlegendentry{name=\\PDGolombS};
      \addplot coordinates { (20,1.25873) (40,1.31279) (80,1.33012) (160,1.39952) (320,1.47916) (640,1.6053) (1280,1.858) };
      \addlegendentry{name=\\PDNoGolombS};

      \legend{}
    \end{semilogxaxis}
  \end{tikzpicture}
  \hfill%
  \begin{tikzpicture}
    \begin{semilogxaxis}[myPlot2,
      title={$\frac{D}{N} = 0.5$}, yticklabel=\empty]

      \addplot coordinates { (20,3.35912) (40,4.07421) (80,4.5833) (160,5.0073) (320,6.0175) (640,8.44688) (1280,16.5044) };
      \addlegendentry{name=\\kurpicz};
      \addplot coordinates { (20,8.6207) (40,10.5517) (80,13.6544) (160,16.5844) (320,18.7044) (640,20.8892) (1280,24.2457) };
      \addlegendentry{name=\\hQuick};
      \addplot coordinates { (20,2.90336) (40,3.61255) (80,4.06268) (160,4.41434) (320,4.86028) (640,5.3965) (1280,6.66228) };
      \addlegendentry{name=\\MSSimpleS};
      \addplot coordinates { (20,2.42654) (40,2.76506) (80,2.97717) (160,3.13746) (320,3.40862) (640,3.61024) (1280,4.34119) };
      \addlegendentry{name=\\MSLCPS};
      \addplot coordinates { (20,2.05441) (40,2.09996) (80,2.12557) (160,2.20416) (320,2.31111) (640,2.46239) (1280,2.75235) };
      \addlegendentry{name=\\PDGolombS};
      \addplot coordinates { (20,2.02638) (40,2.09461) (80,2.13458) (160,2.17763) (320,2.29621) (640,2.44794) (1280,2.74166) };
      \addlegendentry{name=\\PDNoGolombS};

      \legend{}
    \end{semilogxaxis}
  \end{tikzpicture}
  \hfill%
  \begin{tikzpicture}
    \begin{semilogxaxis}[myPlot2,
      title={$\frac{D}{N} = 0.75$}, yticklabel=\empty]

      \addplot coordinates { (20,4.21789) (40,4.97285) (80,5.54216) (160,6.04277) (320,7.27188) (640,10.5287) (1280,20.8205) };
      \addlegendentry{name=\\kurpicz};
      \addplot coordinates { (20,10.0435) (40,12.1236) (80,15.3739) (160,18.4317) (320,20.6093) (640,22.8554) (1280,26.3685) };
      \addlegendentry{name=\\hQuick};
      \addplot coordinates { (20,3.81788) (40,4.57519) (80,5.08588) (160,5.52219) (320,6.00261) (640,6.63849) (1280,7.92202) };
      \addlegendentry{name=\\MSSimpleS};
      \addplot coordinates { (20,2.94884) (40,3.12349) (80,3.26435) (160,3.39165) (320,3.58039) (640,3.57646) (1280,4.01409) };
      \addlegendentry{name=\\MSLCPS};
      \addplot coordinates { (20,2.78807) (40,2.97495) (80,3.09924) (160,3.29609) (320,3.42643) (640,3.41127) (1280,3.91267) };
      \addlegendentry{name=\\PDGolombS};
      \addplot coordinates { (20,2.78188) (40,2.96724) (80,3.10057) (160,3.28027) (320,3.43165) (640,3.41974) (1280,3.88755) };
      \addlegendentry{name=\\PDNoGolombS};

      \legend{}
    \end{semilogxaxis}
  \end{tikzpicture}
  \hfill%
  \begin{tikzpicture}
    \begin{semilogxaxis}[myPlot2,
      title={$\frac{D}{N} = 1.0$}, yticklabel=\empty]

      \addplot coordinates { (20,5.08534) (40,5.8772) (80,6.49925) (160,7.14325) (320,8.55695) (640,12.572) (1280,25.1766) };
      \addlegendentry{name=\\kurpicz};
      \addplot coordinates { (20,11.441) (40,13.6775) (80,17.0041) (160,20.2343) (320,22.5007) (640,24.8315) (1280,28.2393) };
      \addlegendentry{name=\\hQuick};
      \addplot coordinates { (20,4.71975) (40,5.52648) (80,6.0699) (160,6.54026) (320,7.13428) (640,7.65992) (1280,9.12173) };
      \addlegendentry{name=\\MSSimpleS};
      \addplot coordinates { (20,3.37567) (40,3.3903) (80,3.42504) (160,3.48463) (320,3.57187) (640,3.68151) (1280,3.91453) };
      \addlegendentry{name=\\MSLCPS};
      \addplot coordinates { (20,3.22066) (40,3.2238) (80,3.26644) (160,3.33258) (320,3.4296) (640,3.59186) (1280,3.83827) };
      \addlegendentry{name=\\PDGolombS};
      \addplot coordinates { (20,3.21836) (40,3.23808) (80,3.28697) (160,3.32424) (320,3.45889) (640,3.60108) (1280,3.83414) };
      \addlegendentry{name=\\PDNoGolombS};

      \legend{}
    \end{semilogxaxis}
  \end{tikzpicture}

  \pgfplotsset{
    myPlot1/.append style={
      ymin=-40.00, ymax=650,
      height=45mm,
      xlabel={number of PEs},
      xlabel style={yshift=1ex},
    }
  }
  \vspace{-1ex}



  \hspace{8ex}
  \begin{tikzpicture}[trim axis left]
    \begin{semilogxaxis}[myPlot1,
      ylabel={bytes sent per string},
      legend to name={leg:d2n_weak},
      ]

      \addplot coordinates { (20,501.763) (40,504.129) (80,513.672) (160,551.995) (320,705.594) (640,1320.61) (1280,3781.89) };
      \addlegendentry{\kurpicz};
      \addplot coordinates { (20,1102.25) (40,1352.72) (80,1603.22) (160,1853.71) (320,2104.23) (640,2354.73) (1280,2605.23) };
      \addlegendentry{\hQuick};
      \addplot coordinates { (20,501.114) (40,501.266) (80,501.611) (160,502.382) (320,504.083) (640,507.811) (1280,515.917) };
      \addlegendentry{\MSSimpleS};
      \addplot coordinates { (20,498.712) (40,498.072) (80,498.597) (160,499.557) (320,501.495) (640,504.479) (1280,512.822) };
      \addlegendentry{\MSLCPS};
      \addplot coordinates { (20,12.4335) (40,11.6489) (80,11.8464) (160,12.0753) (320,12.3928) (640,11.8214) (1280,12.4274) };
      \addlegendentry{\PDGolombS};
      \addplot coordinates { (20,14.6043) (40,13.8188) (80,14.0146) (160,14.2399) (320,14.5505) (640,13.9656) (1280,14.5443) };
      \addlegendentry{\PDNoGolombS};

    \end{semilogxaxis}
  \end{tikzpicture}
  \hfill%
  \begin{tikzpicture}
    \begin{semilogxaxis}[myPlot1, yticklabel=\empty]

      \addplot coordinates { (20,501.763) (40,504.129) (80,513.672) (160,551.995) (320,705.594) (640,1320.61) (1280,3781.89) };
      \addlegendentry{name=\\kurpicz};
      \addplot coordinates { (20,1102.21) (40,1352.76) (80,1603.24) (160,1853.68) (320,2104.22) (640,2354.71) (1280,2605.22) };
      \addlegendentry{name=\\hQuick};
      \addplot coordinates { (20,501.114) (40,501.267) (80,501.611) (160,502.381) (320,504.084) (640,507.812) (1280,515.917) };
      \addlegendentry{name=\\MSSimpleS};
      \addplot coordinates { (20,378.717) (40,379.081) (80,379.616) (160,380.596) (320,382.572) (640,386.629) (1280,395.123) };
      \addlegendentry{name=\\MSLCPS};
      \addplot coordinates { (20,20.4735) (40,20.7381) (80,21.0444) (160,21.5108) (320,22.3439) (640,23.884) (1280,26.8777) };
      \addlegendentry{name=\\PDGolombS};
      \addplot coordinates { (20,22.644) (40,22.9078) (80,23.2122) (160,23.675) (320,24.5014) (640,26.0279) (1280,28.9945) };
      \addlegendentry{name=\\PDNoGolombS};

      \legend{}
    \end{semilogxaxis}
  \end{tikzpicture}
  \hfill%
  \begin{tikzpicture}
    \begin{semilogxaxis}[myPlot1, yticklabel=\empty]

      \addplot coordinates { (20,501.763) (40,504.129) (80,513.672) (160,551.995) (320,705.594) (640,1320.61) (1280,3781.89) };
      \addlegendentry{name=\\kurpicz};
      \addplot coordinates { (20,1102.18) (40,1352.7) (80,1603.23) (160,1853.71) (320,2104.19) (640,2354.72) (1280,2605.25) };
      \addlegendentry{name=\\hQuick};
      \addplot coordinates { (20,501.114) (40,501.266) (80,501.612) (160,502.382) (320,504.083) (640,507.811) (1280,515.917) };
      \addlegendentry{name=\\MSSimpleS};
      \addplot coordinates { (20,254.722) (40,255.091) (80,255.636) (160,256.635) (320,258.651) (640,262.788) (1280,271.441) };
      \addlegendentry{name=\\MSLCPS};
      \addplot coordinates { (20,24.5084) (40,24.8181) (80,25.2237) (160,25.9092) (320,27.221) (640,29.8003) (1280,35.0362) };
      \addlegendentry{name=\\PDGolombS};
      \addplot coordinates { (20,26.679) (40,26.9876) (80,27.3915) (160,28.0733) (320,29.3784) (640,31.9443) (1280,37.1526) };
      \addlegendentry{name=\\PDNoGolombS};

      \legend{}
    \end{semilogxaxis}
  \end{tikzpicture}
  \hfill%
  \begin{tikzpicture}
    \begin{semilogxaxis}[myPlot1, yticklabel=\empty]

      \addplot coordinates { (20,501.763) (40,504.129) (80,513.672) (160,551.995) (320,705.594) (640,1320.61) (1280,3781.89) };
      \addlegendentry{name=\\kurpicz};
      \addplot coordinates { (20,1102.16) (40,1352.74) (80,1603.17) (160,1853.7) (320,2104.22) (640,2354.7) (1280,2605.22) };
      \addlegendentry{name=\\hQuick};
      \addplot coordinates { (20,501.114) (40,501.266) (80,501.611) (160,502.382) (320,504.083) (640,507.812) (1280,515.917) };
      \addlegendentry{name=\\MSSimpleS};
      \addplot coordinates { (20,129.727) (40,130.101) (80,130.655) (160,131.675) (320,133.731) (640,137.948) (1280,146.761) };
      \addlegendentry{name=\\MSLCPS};
      \addplot coordinates { (20,129.738) (40,130.124) (80,130.7) (160,131.764) (320,133.908) (640,138.303) (1280,147.468) };
      \addlegendentry{name=\\PDGolombS};
      \addplot coordinates { (20,129.738) (40,130.123) (80,130.7) (160,131.764) (320,133.908) (640,138.301) (1280,147.468) };
      \addlegendentry{name=\\PDNoGolombS};

      \legend{}
    \end{semilogxaxis}
  \end{tikzpicture}
  \hfill%
  \begin{tikzpicture}
    \begin{semilogxaxis}[myPlot1, yticklabel=\empty]

      \addplot coordinates { (20,501.763) (40,504.129) (80,513.672) (160,551.995) (320,705.594) (640,1320.61) (1280,3781.89) };
      \addlegendentry{name=\\kurpicz};
      \addplot coordinates { (20,1102.21) (40,1352.71) (80,1603.25) (160,1853.73) (320,2104.25) (640,2354.71) (1280,2605.22) };
      \addlegendentry{name=\\hQuick};
      \addplot coordinates { (20,501.114) (40,501.266) (80,501.611) (160,502.382) (320,504.083) (640,507.811) (1280,515.917) };
      \addlegendentry{name=\\MSSimpleS};
      \addplot coordinates { (20,4.73167) (40,5.11136) (80,5.67566) (160,6.71539) (320,8.81071) (640,13.1085) (1280,22.0811) };
      \addlegendentry{name=\\MSLCPS};
      \addplot coordinates { (20,4.74329) (40,5.13387) (80,5.7202) (160,6.80443) (320,8.98839) (640,13.4622) (1280,22.7881) };
      \addlegendentry{name=\\PDGolombS};
      \addplot coordinates { (20,4.74311) (40,5.13341) (80,5.71978) (160,6.80386) (320,8.98788) (640,13.4618) (1280,22.7879) };
      \addlegendentry{name=\\PDNoGolombS};

      \legend{}
    \end{semilogxaxis}
  \end{tikzpicture}

  \medskip
  \ref{leg:d2n_weak}
  \caption{\label{fig:dtonWeak}%
    Running times and bytes sent per string for the weak-scaling experiment with five generated $\frac{D}{N}$ inputs with 500\,000 strings of length 500 per PE.}

\end{figure*}

\subsection{Algorithms}
We compare the following algorithms:

\kurpicz is the distributed multiway string mergesort of
Fischer and Kurpicz \cite{fischer2019lightweight}; see also
Section~\ref{ss:dmm}. This is the only distributed-memory string sorter that we could find.

\hQuick: As an example for a fast atomic sorting algorithm we use our
adaptation of distributed hypercube atomic
Quicksort by Axtmann et al. \cite{axtmann2017robust}. We adapted its
original
implementation \cite{KaDiS}
by replacing point-to-point communication of fixed length with
point-to-point communication of variable length. See also Section~\ref{ss:hQuick}.

\MSSimpleS is our Distributed String Merge Sort from
Section~\ref{s:DMSS} with no LCP related optimizations at all.

\MSLCPS is our Algorithm \dMSS with LCP compression.

\PDGolombS is an implementation of our Distributed Prefix-Doubling
String Merge Sort (\dPDSS) from Section~\ref{s:DPMSS} that uses Golomb coding
for communicating hash values.

\PDNoGolombS is the same as \PDGolombS except without using Golomb compression
for communicating hash values.

All these algorithms use string based sampling.
Our C++ implementations of all these algorithms is available as open source from \url{https://github.com/mschimek/distributed-string-sorting}.

\subsection{Results}

Fig.~\ref{fig:dtonWeak} shows a weak scaling experiment with 250\,MB of
data on each core using different ratios $D/N$.
As expected, the atomic sorting
algorithm \hQuick is outclassed by the string sorting algorithms.
The only previous distributed string sorter \kurpicz works well up to
320~cores (16~nodes) but scalability then quickly deteriorates. We
attribute this to high communication costs and a bottleneck due to
centralized sorting of samples.  This is also consistent with the
increasing communication volume observed in the lower part of the
plot that shows communication volume.  Already the most simple variant
of our \dMSS algorithm \MSSimpleS consistently outperforms
\kurpicz and \hQuick, and scales reasonably well -- the execution time with 64
nodes (1280 cores) is only about twice that of the execution time with
2 nodes (40 cores). This ratio gets smaller as $D/N$ grows since there
is more internal work to be done. Enabling the LCP optimization in
\MSLCPS yields further consistent improvements.  Not surprisingly,
the advantages get more pronounced with increasing $D/N$ since this
implies longer common prefixes. The prefix doubling algorithms
(\PDGolombS and \PDNoGolombS) yield a further large improvement when
$D/N$ is not too large because we get a large saving in communication
volume.  For large $D/N$, the prefix doubling yields no useful bounds
on the distinguishing prefix length and hence the moderate overheads
for finding these values makes the algorithms slightly slower than
\MSLCPS.  Using or not using Golomb compression of hash values is of little
consequence on running time. We see on the lower part of the figure
that it also has little influence on communication volume. Apparently,
the communication overhead for finding distinguishing prefixes is
rather small anyway.  On the largest configuration (1280 PEs, 64
nodes, 320\,GB of data) the best shown algorithm is 5.3--8.6 times
faster than \kurpicz.

A look at the communication volumes in the lower part of the plot
underlines the great communication efficiency of combining LCP
compression with prefix doubling (algorithms \PDGolombS and
\PDNoGolombS). Using LCP compression only (Algorithm \MSLCPS) is only
effective when the LCPs are long.

Fig.~\ref{fig:crawlStrong}, left panel,
shows strong scaling results for the \CommonCrawl instance.  Here we cannot
show results for the competing code \kurpicz since it crashes.
Apparently it does not correctly handle inputs with many repeated
input strings.  The ranking of the remaining algorithms is similar as
for the \dton-instances. For $p\geq 480$, the algorithms based on prefix doubling are
5.4--6.1 times faster than \hQuick and \MSLCPS is a factor 4.5--4.6 faster.
The algorithms with LCP compression are 2.6--3.5 times faster than \MSSimpleS.
This indicates that the LCP optimizations are very effective
for the \CommonCrawl-instance while prefix doubling does not help here. This
is consistent with the large $D/N$-ratio of 0.68 for this instance where
prefix doubling cannot be effective.
The running
times keep going down until the largest configuration tried. However,
efficiency is rapidly deteriorating. The reason for the difference to
the above weak scaling result may be that the \CommonCrawl-instance is a
factor four smaller for $p=1280$. Hence, experiments with larger real
world inputs are interesting topics for future work.

Fig.~\ref{fig:crawlStrong}, right panel, shows the corresponding results
for the \DnaReads\ input.  Here, algorithms \MSLCPS and \MSSimpleS
are slightly faster than the prefix doubling algorithms, despite
considerable savings in communication volume. Algorithm \kurpicz
works now but does not scale so well.

\begin{figure*}
  \pgfplotscreateplotcyclelist{my_color}{%
    plotcolor0, every mark/.append style={solid,scale=1.0}, mark=x \\%
    plotcolor1, every mark/.append style={solid,scale=0.9}, mark=o \\%
    plotcolor2, every mark/.append style={solid,scale=1.0}, mark=diamond \\%
    plotcolor3, every mark/.append style={solid,scale=1.0}, mark=triangle \\%
    plotcolor4, every mark/.append style={solid,scale=1.0,rotate=180}, mark=triangle \\%
  }
  \pgfplotsset{
    myPlot1/.style={
      myPlot,
      width=90mm,height=52mm,
      cycle list name={my_color},
      log x ticks with fixed point,
      xtick={20,40,80,160,320,480,640,960,1280,2560},
      xticklabel style={rotate=90, anchor=east},
      legend columns=6,
      xlabel={number of PEs},
      xlabel style={yshift=1.5ex},
    },
    myPlot2/.style={
      myPlot1,
      xticklabels={,,},
      xlabel={},
    },
  }

  \centering
  \begin{minipage}{.5\textwidth}
  \hfill%
  \begin{tikzpicture}
    \begin{semilogxaxis}[myPlot2,
      title={\CommonCrawl},
      ylabel={time (s)},
      ymin=0.00, ymax=32,
      legend to name={leg:cc_strong}]

      \addplot coordinates { (320,35.6809) (480,32.9128) (640,29.7826) (960,24.75) (1280,22.7489) };
      \addlegendentry{\hQuick};
      \addplot coordinates { (160,20.5926) (320,17.4044) (480,15.4007) (640,16.157) (960,14.0163) (1280,14.6748) };
      \addlegendentry{\MSSimpleS};
      \addplot coordinates { (160,14.402) (320,9.13729) (480,7.38794) (640,6.40026) (960,5.41531) (1280,4.96395) };
      \addlegendentry{\MSLCPS};
      \addplot coordinates { (160,14.3727) (320,8.37111) (480,5.95796) (640,4.85112) (960,4.4463) (1280,4.24431) };
      \addlegendentry{\PDGolombS};
      \addplot coordinates { (160,14.2329) (320,8.27867) (480,5.96323) (640,4.87924) (960,4.42829) (1280,4.20009) };
      \addlegendentry{\PDNoGolombS};

    \end{semilogxaxis}
  \end{tikzpicture}

  \vspace{-1ex}%
  \hfill%
  \begin{tikzpicture}
    \begin{semilogxaxis}[myPlot1,
      height=44mm,
      ymin=15, ymax=53,
      ylabel={bytes sent per string}]

      \addplot coordinates { (320,167.059) (480,177.112) (640,186.689) (960,196.88) (1280,205.911) };
      \addlegendentry{name=\\hQuick};
      \addplot coordinates { (160,39.5804) (320,39.5957) (480,39.6228) (640,39.6628) (960,39.7815) (1280,39.956) };
      \addlegendentry{name=\\MSSimpleS};
      \addplot coordinates { (160,22.3579) (320,23.0223) (480,23.2155) (640,23.3661) (960,23.6409) (1280,23.9368) };
      \addlegendentry{name=\\MSLCPS};
      \addplot coordinates { (160,19.9725) (320,20.9082) (480,21.2629) (640,21.5737) (960,22.1844) (1280,22.8795) };
      \addlegendentry{name=\\PDGolombS};
      \addplot coordinates { (160,22.6172) (320,23.7908) (480,24.2026) (640,24.5327) (960,25.1297) (1280,25.7602) };
      \addlegendentry{name=\\PDNoGolombS};

      \legend{}
    \end{semilogxaxis}
  \end{tikzpicture}

  \end{minipage}%
  \pgfplotscreateplotcyclelist{my_color}{%
    plotcolor9, every mark/.append style={solid,scale=1.1}, mark=Mercedes star flipped \\%
    plotcolor0, every mark/.append style={solid,scale=1.0}, mark=x \\%
    plotcolor1, every mark/.append style={solid,scale=0.9}, mark=o \\%
    plotcolor2, every mark/.append style={solid,scale=1.0}, mark=diamond \\%
    plotcolor3, every mark/.append style={solid,scale=1.0}, mark=triangle \\%
    plotcolor4, every mark/.append style={solid,scale=1.0,rotate=180}, mark=triangle \\%
  }%
  \pgfplotsset{
    myPlot1/.append style={
      cycle list name={my_color},
    },
  }%
  \begin{minipage}{.5\textwidth}
  \hfill%
  \begin{tikzpicture}
    \begin{semilogxaxis}[myPlot2,
      title={\DnaReads},
      ylabel={time (s)},
      ymin=0.00, ymax=21,
      legend to name={leg:dna_strong}]

      \addplot coordinates { (320,11.6735) (480,8.90505) (640,7.49175) (960,6.80904) (1280,6.53329) };
      \addlegendentry{\kurpicz};
      \addplot coordinates { (320,31.643) (480,28.166) (640,17.4096) (960,16.1212) (1280,9.51927) };
      \addlegendentry{\hQuick};
      \addplot coordinates { (160,13.9399) (320,7.54662) (480,5.41524) (640,4.3328) (960,3.47782) (1280,2.59856) };
      \addlegendentry{\MSSimpleS};
      \addplot coordinates { (160,14.4356) (320,7.77576) (480,5.54417) (640,4.30537) (960,3.40826) (1280,2.58642) };
      \addlegendentry{\MSLCPS};
      \addplot coordinates { (160,18.857) (320,9.8398) (480,6.91774) (640,5.13906) (960,3.99573) (1280,3.34914) };
      \addlegendentry{\PDGolombS};
      \addplot coordinates { (160,18.4337) (320,9.69519) (480,6.78517) (640,5.12081) (960,3.96693) (1280,3.33055) };
      \addlegendentry{\PDNoGolombS};

    \end{semilogxaxis}
  \end{tikzpicture}

  \vspace{-1ex}%
  \hfill%
  \begin{tikzpicture}
    \begin{semilogxaxis}[myPlot1,
      height=44mm,
      ymin=31, ymax=115,
      ylabel={bytes sent per string}]

      \addplot coordinates { (320,97.6762) (480,97.699) (640,97.7308) (960,97.8218) (1280,97.9492) };
      \addlegendentry{name=\\kurpicz};
      \addplot coordinates { (320,410.658) (480,436.639) (640,459.493) (960,485.474) (1280,508.31) };
      \addlegendentry{name=\\hQuick};
      \addplot coordinates { (160,97.6746) (320,97.7326) (480,97.8345) (640,97.9901) (960,98.4412) (1280,99.123) };
      \addlegendentry{name=\\MSSimpleS};
      \addplot coordinates { (160,83.1361) (320,84.3505) (480,85.1018) (640,85.7024) (960,86.7635) (1280,87.8664) };
      \addlegendentry{name=\\MSLCPS};
      \addplot coordinates { (160,46.1562) (320,47.7708) (480,48.7615) (640,49.5343) (960,50.818) (1280,52.0315) };
      \addlegendentry{name=\\PDGolombS};
      \addplot coordinates { (160,55.1824) (320,56.9701) (480,58.0637) (640,58.8983) (960,60.2608) (1280,61.4904) };
      \addlegendentry{name=\\PDNoGolombS};

      \legend{}
    \end{semilogxaxis}
  \end{tikzpicture}

  \end{minipage}

  \medskip
  \ref{leg:dna_strong}
  \caption{\label{fig:crawlStrong}Running times and number of bytes sent per string in strong-scaling for \CommonCrawl (82\,GB) and \DnaReads (125\,GB) inputs.}

\end{figure*}

\subsection{Summary of Further Experiments}\label{ss:moreInputs}

We now outline the results of further experiments in \cite{Schimek19}.
Character-based sampling
is inconsequential for the \dton instances since their uniform length
and random distribution makes load balancing easy. For \CommonCrawl, our
initial implementation is detrimental indicating further improvement
potential. In particular, we have to carefully handle repeated short
substrings. A tendency of character-based sampling to select long
splitter keys indicates that one should perhaps consider using prefixes of
samples as splitters.

Besides the \CommonCrawl and \DnaReads instances, we also tried an instance consisting of
71\,GB of Wikipedia pages.  The results are similar to the
\CommonCrawl instance so that we do not show them here.

As a first attempt in the direction of suffix sorting, we considered
the first 3000 lines of the above Wikipedia instance as a single
string and used all their suffixes as input.  This instance has
$N\approx 104\cdot 10^9$ and $D\approx 10.4\cdot 10^6$, i.e.,
$D/N\approx 0.0001$. This is a very easy instance (execution time
about 0.2 seconds on 160 PEs) for algorithm \dPDSS and a fairly
difficult instance for all the other algorithms.  Algorithm \dPDSS is
about 30 times faster than the other algorithms for $p=160$. For
larger $p$ this advantage shrinks because larger inputs would be
needed to achieve scalability. However, these inputs would be very
expensive for the other algorithms so that we did not pursue suffix
instances given our limited compute resources.

We also generated \emph{skewed} variants of our \dton-instances as
follows: The 20\,\% smallest of these strings are padded with additional
characters that make them 4 times longer (now 2000 characters) but
without contributing to the distinguishing prefixes. The relative ranking of the algorithms
\hQuick, \kurpicz, \dMSS, and \dPDSS remains the same.
Among the variants of \dMSS, those with character-based sampling now profit
because they avoid deteriorating load balance due to the skewed distribution of output string lengths.

\section{Conclusions and Future Work}\label{s:conclusion}

With Algorithm~\dPDSS, we have developed a distributed-memory string
sorting algorithm that efficiently sorts large data sets both in
theory and practice.  The algorithm is several times faster than the
best previous algorithms and scales well for sufficiently large
inputs.

One approach to further optimize the algorithm is
to improve splitter selection.  Analogous to the work done for atomic
sorting \cite{axtmann2017robust} one could remove load balancing
problems due to duplicate strings by tie breaking techniques. One
could also consider whether it helps to look for short splitter
strings. The bounds could also be improved by going from deterministic
sampling to random sampling. This requires less samples and, in
expectation, the sample strings have average length rather than
length $\lmax$. Adapting the techniques from
\cite{VarEtAl91,axtmann2015practical} for \emph{perfect} splitting in
atomic sorting to string sorting could also be interesting. Probably
this only makes sense if we also use a refined cost model that takes
both the number of strings and their distinguishing prefix lengths
into account.

To speed up sorting of the sample but also for other applications, it
is interesting to look for parallel string sorting algorithms for
small inputs that are faster than \hQuick. One approach would be to
adapt the key idea of Multikey Quicksort to look not at entire strings
as pivots \cite{bentley1997fast}. In \cite{bingmann2018scalable} this
is refined by looking at several characters at once. Probably for a
distributed algorithm one should look at up to $\Oh{\alpha\log p/(\beta\log\sigma)}$
characters at a time to find the right balance between latency and
communication volume.

An interesting observation is that algorithms based on data
partitioning rather than merging are successful both for atomic
sorting \cite{axtmann2017robust} and shared-memory string sorting
\cite{bingmann2018scalable}. We have chosen a merging based algorithm
here since it is not clear how to do LCP compression without locally
sorting first.

The large $D/N$ values in our practical inputs are in part due to many
repeated keys.  Perhaps one could design string sorting algorithms
that do not communicate duplicate keys. One could modify
Algorithm~\dPDSS so that it detects likely duplicates and decides to
communicate only one copy of each duplicate to their common
destination PE. A problem with this approach is that we cannot
guarantee that these strings are duplicates.  We could enforce a very
small false positive rate but we would end up with a result that is
only approximately sorted.

The average-case upper bound of $\Oh{\log p}$ bits per string from
Section~\ref{ss:average} is intriguing from a theoretical point of
view. For atomic sorting, the average case and the worst case running
time share the same upper and lower bounds.  Does this extend to the
communication complexity of string sorting? Are there algorithms that
need $o(D/n)$ communication volume in the worst case?  What are the
lower bounds? $D/n$? $\log p$? Something in between?  The answer will
likely depend on the small print in how we define our sorting problem.
It should also be noted here that the lower bound for the easier
problem of duplicate detection is conjectured to be $\log p$
\cite{sanders2013communication} but that this is also still an open
problem for distributed communication complexity with point-to-point
communication.

The algorithm for approximating distinguishing prefixes from
Section~\ref{ss:distinguishing} is an overkill if we only need
information on global values like $D/n$ or its variance.  These values
can be approximated more efficiently by sampling. A simple approach is
to gossip a small sample of the input strings. Then, without further
communication, their distinguishing prefix sizes can be computed
locally.  However, this way we can only process small samples which
might be insufficient when $\dmax\gg D/n$ -- a small sample
is insufficient if $D/n$ is dominated by a small number of strings
with very large $\dpre{s}$.%
\footnote{Initial calculations indicate that a sample of size
  $\Th{\varepsilon^{-2}n\dmax/D}$ is needed to approximate $D$ with
  standard deviation $\varepsilon D$.}  More efficiently, we can take a
Bernoulli sample of prefixes of \emph{keys} rather than input
strings. This allows us to still use distributed hashing and thus
makes the algorithm more scalable. Also this might reduce the amount
of local work.

\section*{Acknowledgment}

We used the ForHLR~I cluster funded by the Ministry of Science, Research and the Arts Baden-W\"urttemberg and by the Federal Ministry of Education and Research.



%

\bibliography{references}

\end{document}